\documentclass[a4paper,UKenglish,cleveref, autoref, thm-restate]{lipics-v2021}

\usepackage{tikz}
\usetikzlibrary{positioning, shapes.geometric, decorations.pathreplacing, calligraphy, calc, backgrounds}
\usepackage{comment,todonotes}
\usepackage{xspace}

\usepackage{bm} 

\newcommand{\vcew}{{\sc Vertex-coloring Edge-weighting}\xspace}

\newcommand{\prvcew}{{\sc Vertex-Coloring Pre-edge-Weighting}\xspace}
\newcommand{\lc}{{\sc List Coloring}\xspace}

\newcommand{\onetwothree}{1--2--3~Conjecture\xspace} 
\newcommand{\yes}{\textsf{Yes}\xspace}
\newcommand{\no}{\textsf{No}\xspace}
\newcommand{\XP}{\textsf{XP}\xspace}
\newcommand{\FPT}{\textsf{FPT}\xspace}
\newcommand{\WoneHard}{\textsf{W[1]-hard}\xspace}

\nolinenumbers

 \newcommand{\defproblem}[3]{
  \vspace{1mm}
\noindent\fbox{
  \begin{minipage}{0.96\textwidth}
   \begin{tabular*}{\textwidth}{@{\extracolsep{\fill}}lr} #1  
 \\ \end{tabular*}
  {\bf{Input:}} #2  \\
  {\bf{Question:}} #3
  \end{minipage}
  }
 \vspace{1mm}
}

\title{The Parameterized Complexity of Vertex-Coloring Edge-Weighting} 

\author{Shubhada Aute 
}{Department of Computer Science and Engineering, IIT Hyderabad, India}{cs21resch11001@iith.ac.in}{0009-0000-2964-0368}{}

\author{Fahad Panolan 
}{School of Computer Science, University of Leads, UK}{F.Panolan@leeds.ac.uk}{ [0000-0001-6213-8687]}{}

\author{Geevarghese Philip 
}{Chennai Mathematical Institute, India}{gphilip@cmi.ac.in}{https://www.cmi.ac.in/~gphilip/}{}

\authorrunning{S. Aute, F. Panolan, and G. Philip} 

\Copyright{Shubhada Aute, Fahad Panolan, and Geevarghese Philip} 

\ccsdesc[500]{Theory of computation~Fixed parameter tractability}

\keywords{Graph coloring, treewidth, FPT, W[1]-hard, 1-2-3 Conjecture} 

\category{} 

\relatedversion{} 

\begin{document}

\maketitle

\begin{abstract}
Motivated by the landmark resolution of the 1-2-3 Conjecture, we initiate the study of the parameterized complexity of the Vertex-Coloring $\{0,1\}$-Edge-Weighting problem and its generalization, Vertex-Coloring Pre-edge-Weighting, under various structural parameters.
The base problem, Vertex-Coloring $\{0,1\}$-Edge-Weighting, asks whether we can assign a weight from $\{0,1\}$ to each edge of a graph. The goal is to ensure that for every pair of adjacent vertices, the sums of their incident edge weights are distinct. In the Vertex-Coloring Pre-edge-Weighting variant, we are given a graph where a subset of edges is already assigned fixed weights from $\{0,1\}$. The goal is to determine if this partial weighting can be extended to all remaining
edges such that the final, complete assignment satisfies the proper vertex coloring property.
While the existence of such weightings is well-understood for specific graph classes, their algorithmic complexity under structural parameterization has
remained unexplored.

We prove both hardness and tractability for the problem,
across a hierarchy of structural parameters. 
We show that both the base problem
and the Pre-edge-Weighting variant are W[1]-hard when parameterized by the size of a feedback vertex set (FVS) of the input graph.
On the positive side, we establish that the base problem and a restricted Pre-edge-Weighting variant where the pre-assigned weights are all $1$, become FPT when parameterized by the size of a vertex cover (VC) of the input graph. 
Further, we show that both the base problem and the Pre-edge-Weighting variant have  XP algorithms when parameterized by the treewidth of the input graph. 

These results constitute the first formal investigation into the parameterized complexity of the 
Vertex-Coloring $\{0,1\}$-Edge-Weighting problem and the Vertex-Coloring $\{0,1\}$-Pre-edge-Weighting, 
offering sharp boundaries between tractability and NP-hardness. They also add to the small list of problems that are known to be FPT when
parameterized by VC, but become W-hard for the parameter FVS.

\end{abstract}

\section{Introduction}\label{sec:introduction}
Let \(G=(V,E)\) be an undirected graph, and let \(S\) be a set of integers. A
weight function \(w:E(G) \to S\) is said to induce the \emph{color} \({\sf color}_{w}(v) =
\sum_{e \ni v} w(e)\)---the sum of all the weights of its incident edges---to
each vertex \(v\) of graph \(G\). Such a function \(w\) is said to be a
\emph{vertex-coloring edge-weighting} if this coloring is \emph{proper}---if
\({\sf color}_{w}(u) \neq {\sf color}_{w}(v)\) holds for every edge \(\{u,v\} \in E(G)\).
If there is a pair of adjacent vertices $u$ and $v$, such that ${\sf color}_w(u) = {\sf color}_w(v)$, then we call it a {\em coloring conflict}.

The famous \onetwothree in graph theory, posed by Karo\'nski, \L{}uczak, and
Thomason in 2004, proposed that every undirected simple graph without isolated
edges admits a vertex-coloring edge-weighting from the set \(\{1,2,3\}\). After
remaining open for nearly two decades and inspiring a lot of research that
yielded a number of related results, the conjecture was very recently resolved
in the affirmative by Keusch~\cite{keusch2024solution}.

In this work we take up the study of the parameterized complexity of the natural
\(\{0,1\}\)-edge-weighting variant of this problem, and of its pre-weighting
extension. We derive both FPT algorithms and W[1]-hardness results for these
problems under various parameterizations. 


All our graphs are finite, undirected, and simple, with no isolated edges. From
now on we consider weight functions of the form \(w:E(G) \to \{0,1\}\)---the
edge weights are all \(0\) or \(1\). For the sake of brevity we say that such a
weight function \(w\) is \emph{proper}---in place of ``vertex-coloring edge
weighting''---if it induces a proper coloring of the vertices. The primary
focus of this work is on the algorithmic problem of deciding the existence of a
proper weight function.

\defproblem{\vcew}%
{An undirected graph \(G=(V,E)\) on \(n\) vertices.}%
{If \(G\) admits a proper weight function, then output \yes. Otherwise, output
  \no.}

We also consider a generalization where some edges are pre-weighted.

\defproblem{\prvcew}%
{An undirected graph \(G=(V,E)\) on \(n\) vertices, with some edges pre-weighted
  by \(0\) or \(1\).}%
{If \(G\) admits a proper weight function that extends the pre-weighting, then
  output \yes. Otherwise, output \no.}

We establish parameterized tractability and intractability results for these
problems, under various natural parameters.

\smallskip
\noindent
{\bf Parameterized complexity results.}
We show that neither problem is likely to be \FPT for the parameter `size of a
smallest feedback vertex set'.

\begin{theorem}
  \label{thm:hardFVS} {\normalfont\vcew} and {\normalfont\prvcew} are
  \WoneHard parameterized by the size of a smallest feedback vertex set of
  the input graph.
\end{theorem}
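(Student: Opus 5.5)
We plan a parameterized reduction from \textsc{Multicolored Clique} (or, more conveniently, from \lc parameterized by the size of a feedback vertex set, or from \textsc{Capacitated/Target-sum} variants known to be \WoneHard for FVS). Since W[1]-hardness for \vcew implies it for \prvcew (no pre-weighted edges), it suffices to treat \vcew; but we first build the reduction for \prvcew, where pre-weights make gadgets easy, and then replace each pre-weighted edge by a small forced gadget (a tree attached to the endpoint whose every proper weighting forces the desired contribution to the endpoint's color). Such tree gadgets do not increase the feedback vertex set.

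The core reduction, from \textsc{Multicolored Clique} with color classes $V_1,\dots,V_k$, will use $O(k^2)$ ``hub'' vertices forming the feedback vertex set: a selector hub $s_i$ for each class and a checker hub $c_{ij}$ for each pair $i<j$. The rest of the graph will be a forest of pendant paths and stars hanging off the hubs. The color of $s_i$ (the number of weight-1 edges at it) will encode the chosen vertex of $V_i$: we attach to $s_i$ a large set of leaves/pendant edges whose weights are free, and we use forbidden-color gadgets, namely pendant neighbors $u$ of $s_i$ whose color is forced (via pre-weighting of the other edges at $u$) to be a specific value $t$, so that $s_i$ cannot take color $t$. This lets us restrict ${\sf color}_w(s_i)$ to a prescribed set $\{a_v : v\in V_i\}$ of integers, chosen as a Sidon-type set (e.g.\ $a_v$ spaced far apart with distinct pairwise sums) so that pairs are identifiable. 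For the edge check, $c_{ij}$ is connected to $s_i$ and $s_j$ via tree paths that ``copy'' the selected value; concretely I would make the color of $c_{ij}$ essentially $a_u + a_v$ plus free slack, and forbid via pendant forced-color neighbors all values $a_u+a_v$ for non-edges $uv$. Copying a value from $s_i$ to $c_{ij}$ through a forest is the delicate part: I intend to use the fact that adjacent vertices must differ, so a chain of degree-controlled vertices with forbidden colors can transmit a number only in unary, which is why I will instead share free edges: $s_i$ and $c_{ij}$ will be joined through many degree-2 middle vertices, each of which must take color $1$ or be forbidden, effectively coupling counts.

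The main obstacle is exactly this transmission of the choice from selectors to checkers while keeping the non-hub part acyclic: paths $s_i - x - c_{ij}$ through many $x$'s create cycles only through hubs, which is fine since hubs are in the deletion set. I would make each middle vertex $x$ have pre-weighted pendants forcing it to require exactly one of its two hub edges to be 1 (its color is forced away from the values it would otherwise take), so the number of weight-1 edges from $s_i$ into the $(i,j)$ bundle equals the complement count at $c_{ij}$; combined with fixed totals this makes ${\sf color}(c_{ij})$ an affine function of the choices at $s_i,s_j$. Finally I verify correctness in both directions (a clique yields consistent weights with all forced leaves resolvable; conversely forbidden colors force a clique), check that the FVS is $O(k^2)$ and the construction polynomial, and then remove pre-weights via the forcing tree gadgets to obtain hardness for \vcew.
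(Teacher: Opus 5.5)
Your proposal has a genuine gap at the step you leave to the end: removing the pre-weights.

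The pre-weighted edges that matter in your construction are not pendant ones. They are the weight-$0$ edges $us_i$ joining a forced-color vertex $u$ to a hub, and likewise the edges through which the middle vertices of a bundle are forced. For such an edge, the adjacency itself is what forbids the color. So it cannot be deleted and ``replaced'' by trees that contribute a fixed amount at its endpoints; you must keep the edge and simultaneously force $w(us_i)=0$ and ${\sf color}(u)=t$. You supply no tree gadget that does both, and the natural ones fail:
\begin{itemize}
\item With $t$ suspended paths at $u$, its color is $t+w(us_i)$, and a weighting simply picks $w(us_i)$ so that $u$ avoids ${\sf color}(s_i)$.
\item Adding a neighbor $c$ of $u$ carrying $t+1$ suspended paths does force $w(us_i)=0$. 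But then ${\sf color}(u)=t+w(uc)\in\{t,t+1\}$ is no longer pinned.
\end{itemize}
This is exactly the obstacle the paper's proof is built around. It puts one extra vertex $z$ into the deletion set and attaches $N$ triangle (type-A) gadgets to it. These contribute exactly $N$ to ${\sf color}(z)$ and forbid the colors $N+1,\dots,2N$, so every other edge at $z$ must have weight $0$. A path $v,x_1,\dots,x_{N-k},z$, in which $x_i$ carries $k+i-1$ suspended paths, then propagates weight $0$ back to $vx_1$ and pins ${\sf color}(x_1)=k$ (Lemma~\ref{lem:red:cascade}). Every new cycle passes through $z$, so the feedback vertex set grows by only one.

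Independently, the selector--checker coupling does not work as described. The hub $s_i$ has edges into one bundle for every other class, as well as free pendant edges. Restricting ${\sf color}(s_i)$ to $\{a_v : v\in V_i\}$ therefore only constrains the \emph{total} number of weight-$1$ edges at $s_i$. Nothing forces the number entering the bundle toward $c_{ij}$ to be determined by the chosen $a_v$, let alone consistently over all $j$, and that consistency is the whole content of a \textsc{Multicolored Clique} reduction. The ``free slack'' at $c_{ij}$ likewise lets its color step around each forbidden sum unless the slack is pinned.

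None of this machinery is needed if, as the paper does, you reduce from \lc{} parameterized by the vertex cover number (W[1]-hard by Fiala, Golovach and Kratochv\'il). Keep $G$ itself. Give each original vertex $v$ some $t-2$ pendant vertices and two suspended paths, so that its color is at least $2$ and any value of $L(v)$ is realizable with all edges of $G$ weighted $0$. Attach a color-forbidding gadget for each $k\in[t+n-1]\setminus(L(v)\cup\{1\})$. A vertex cover of $G$ together with $z$ is then a feedback vertex set of the new graph.
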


We then look at the ``smaller'' parameter `treewidth of the input graph', for
which the W[1]-hardness directly carries over from \autoref{thm:hardFVS}. We show
that \vcew is in \XP for this parameter.

\begin{theorem}
  \label{thm:xpTreewidth} {\normalfont\vcew} can be solved in \XP time
  parameterized by the treewidth of the input graph: There is an algorithm which
  takes a graph \(G\) as input, runs in \XP time with respect to treewidth, and
  correctly decides if \(G\) admits a proper weight function. The algorithm
  outputs one proper coloring of the graph, if it exists.
\end{theorem}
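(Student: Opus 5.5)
We plan a dynamic program over a nice tree decomposition. First, using a constant-factor approximation for treewidth (e.g., Bodlaender et al.), we compute in FPT time a nice tree decomposition $(T,\{X_t\}_{t\in V(T)})$ of width $O(k)$, where $k$ is the treewidth of $G$. We use introduce-vertex, introduce-edge, forget, and join nodes, so that each edge of $G$ is introduced exactly once, at a node where both endpoints lie in the bag. Let $G_t$ denote the subgraph consisting of the vertices and edges introduced in the subtree rooted at $t$.

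\textbf{States.} The key observation is that a vertex $v\in X_t$ is not yet finalized at $t$, so we record two numbers for it.
\begin{itemize}
\item The partial sum $p(v)\in\{0,\dots,n-1\}$ of weights on the edges of $G_t$ incident to $v$.
\item A guessed final color $c(v)\in\{0,\dots,n-1\}$, which is the value ${\sf color}_w(v)$ will have in the complete weighting.
\end{itemize}
Each entry of the table at $t$ is a pair of functions $(p,c)$ on $X_t$. The entry is \emph{feasible} if there is a weighting $w$ of $E(G_t)$ satisfying three conditions.
\begin{itemize}
\item Every $v\in X_t$ has partial sum exactly $p(v)$ under $w$.
\item Every forgotten vertex $u\in V(G_t)\setminus X_t$ has its full sum equal to its guessed color. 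All edges at $u$ lie in $G_t$, so this sum is final.
\item Every edge of $G_t$ gets distinct final colors at its endpoints. A bag vertex contributes its guessed value $c$, and a forgotten vertex contributes its actual sum.
\end{itemize}
The table at each node has size $n^{O(k)}$, which gives the \XP bound.

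\textbf{Transitions.}
\begin{itemize}
\item \emph{Introduce vertex $v$:} set $p(v)=0$ and let $c(v)$ range over all $n$ values.
\item \emph{Introduce edge $uv$:} require $c(u)\neq c(v)$. Then branch on $w(uv)\in\{0,1\}$, adding the chosen weight to both $p(u)$ and $p(v)$.
\item \emph{Forget $v$:} keep exactly those entries with $p(v)=c(v)$. This is valid because every edge incident to $v$ has already been introduced below a forget node of $v$.
\item \emph{Join with children $t_1,t_2$:} require the guessed colors to agree on the shared bag, and set $p=p_1+p_2$ pointwise. This is correct because the edge sets of the two subtrees are disjoint, so no edge is counted twice.
\end{itemize}
The join step enumerates pairs of entries, costing $n^{O(k)}$ per node. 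At the root, whose bag is empty, we answer \yes exactly when the empty entry is feasible. A proper weighting is recovered by storing back-pointers and tracing them from the root.

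\textbf{Main obstacle.} The main obstacle is the correctness argument, specifically that conflicts are checked exactly once and against the true final colors. Checking each edge at its introduction against the guesses $c$ is sound because the forget step later forces $c(v)$ to equal the real color. The guess is consistent across branches because join nodes enforce agreement on $c$. The proof will show by induction over $T$ that the table computed at $t$ equals the set of feasible entries. Soundness comes from combining witnesses at join nodes, which works because the edge sets are disjoint. Completeness comes from restricting a global proper weighting to each $G_t$ and setting $c$ to its true colors. Since colors are at most $\Delta\le n-1$, the table has $n^{O(\mathrm{tw})}$ entries and the total running time is $n^{O(\mathrm{tw})}$.
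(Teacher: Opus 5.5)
Your proposal is correct and is essentially the paper's own proof. The paper runs the same dynamic program over a nice tree decomposition, stores for each bag vertex the pair (final degree, current degree), which is exactly your $(c(v),p(v))$, checks distinct final degrees at introduce-edge nodes, and adds current degrees while matching final degrees at join nodes, for an $n^{O(\mathrm{tw})}$ bound. Your explicit filter $p(v)=c(v)$ at forget nodes is genuinely needed for soundness; it is missing from the paper's written forget-node transition, so your version is the more careful one on that point.
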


The algorithm from Theorem~\ref{thm:xpTreewidth} can be easily modified to get an
\XP algorithm for the \prvcew problem. 
And it directly implies a polynomial-time algorithm for {\normalfont\vcew} on trees. Madarasi and Simon \cite{madarasi2022vertex} gave polynomial-time algorithm for trees for {\sc Vertex-Coloring} \{a,b\} {\sc Edge-Weighting}, where $a,b$ are any two integers. They also consider the {\sc Pre-Edge-Weighting} version of the problem for tress. 
Lyngsie~\cite{lyngsie2018neighbour} has
given a characterization of trees without a proper \(\{0,1\}\) edge-weighting.





Recall that the size of a smallest feedback vertex set of a graph lies between
the treewidth of the graph and the size of a smallest vertex cover of the graph.
In contrast to the hardness for the first two parameters, we show that
{\normalfont\vcew} becomes \FPT when the parameter is relaxed to be the size of
a smallest vertex cover of the input graph. In fact, we show tractability for a
restricted variant of the more general \prvcew problem as well, for this
parameter.

\begin{theorem}
  \label{thm:fptVC} {\normalfont\vcew} and {\normalfont\prvcew} where the
  pre-assigned weights are all 1, are both \FPT parameterized by the size of a
  smallest vertex cover of the input graph.
\end{theorem}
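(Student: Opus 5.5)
We compute a vertex cover \(C\) of size \(k\) in time \(O(2^k n)\) and write \(I = V \setminus C\) for the remaining independent set. The plan is to branch on all \(2^{O(k^2)}\) weightings of the edges inside \(C\), consistent with any pre-assigned weights. For each such guess we also guess the final color \(c(v)\in\{0,\dots,n-1\}\) of every \(v\in C\). This second guess is not FPT if done naively, so we refine it below. Once both are fixed, every vertex \(v\in C\) needs exactly \(r(v)=c(v)-(\text{weight inside } C)\) of its edges to \(I\) to receive weight \(1\). Each \(u\in I\) has neighbourhood \(N(u)\subseteq C\), and its color is the number of weight-\(1\) edges it sends to \(C\), i.e.\ the size of a subset \(S_u\subseteq N(u)\). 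Properness requires two things: \(c\) must be proper on \(G[C]\), and for every \(u\in I\) we need \(|S_u|\neq c(v)\) for all \(v\in N(u)\). The latter condition depends only on the set of edges of \(u\) that go to weight \(1\).

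\textbf{Grouping by type.} Partition \(I\) into at most \(2^k\) classes by neighbourhood \(N(u)=T\subseteq C\). Within a class, a vertex chooses a subset \(S\subseteq T\), which is a pattern among \(2^{|T|}\le 2^k\) options, and the pattern is allowed iff \(|S|\notin\{c(v):v\in T\}\). In the pre-weighted variant with only 1's, a pre-weighted edge \(uv\) forces \(v\in S_u\). We refine the types accordingly: the type of \(u\) becomes the pair \((T, F)\) with \(F\subseteq T\) the forced set, giving at most \(3^k\) types. This restriction to weight 1 is exactly why pre-weights stay type-compatible. Pre-weights 0 would also be expressible this way, so hardness must come from elsewhere. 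The decision problem now asks for nonnegative integers \(x_{(T,F),S}\), counting the vertices of each type that pick pattern \(S\). These must sum to the class size for each type, and for every \(v\in C\) the sum of \(x_{(T,F),S}\) over patterns with \(v\in S\) must equal \(r(v)\). This is an integer program with \(2^{O(k)}\) variables, solvable by Lenstra/Kannan in FPT time.

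\textbf{Main obstacle.} The difficulty is that the colors \(c(v)\) can be as large as \(n\), so guessing them directly costs \(n^k\), and the constraints \(|S|\neq c(v)\) are not linear. We would handle this as follows. Since \(|S|\le k\), only the values of \(c(v)\) in \(\{0,\dots,k\}\) matter for the \(I\)-side constraints. We therefore guess, for each \(v\in C\), either the exact value \(c(v)\in\{0,\dots,k\}\) or the symbol ``\(>k\)''. We also guess the ordering and equality pattern of the colors on \(C\), which takes \(k^{O(k)}\) branches, so that properness on \(G[C]\) becomes a set of constraints of the form \(c(v)<c(v')\) between adjacent vertices. The quantities \(c(v)\) then become integer variables of the ILP. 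Each one is given either by a fixed value or by the constraint \(c(v)\ge k+1\), and the ordering constraints between adjacent vertices of \(C\) are linear in \(c(v)=\text{(inner weight)}+\sum x\). The allowed patterns for each type are now determined by the guess alone, so everything is linear with \(2^{O(k)}\) variables. The total running time is \(2^{O(k^2)}\cdot k^{O(k)}\cdot \text{ILP}(2^{O(k)})\cdot n^{O(1)}\), which is FPT. The remaining verification is routine: any ILP solution can be realized by assigning the patterns arbitrarily to the vertices of each type, and conversely every proper weighting yields a feasible solution in the matching branch.
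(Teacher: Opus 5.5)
Your proof is correct, but it takes a genuinely different route from the paper's.

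\textbf{The paper's route.} It never sets up an integer program. It first proves, by a potential-minimisation exchange argument, that a yes-instance always has a proper weighting in which every color is at most $8k^2+8k$. For the pre-weighted variant the bound is on the amount a color exceeds the base color forced by the pre-weights. It uses this bound to truncate each neighbourhood class of the independent set to $O(k^3)$ vertices; in the pre-weighted case it instead deletes the free edges of surplus vertices in a class. It then brute-forces over all sets of at most $k(8k^2+8k)$ weight-$1$ edges, for a running time of $2^{O(k^4)}n^{O(1)}$.

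\textbf{Your route.} You avoid bounding colors altogether. Only cover vertices with color at most $k$ can clash with independent-set vertices. Guessing those values and the relative order of colors on adjacent cover vertices makes every constraint linear, and Lenstra--Kannan on $2^{O(k)}$ variables finishes.

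\textbf{Comparison.} The paper's approach gives a much better parameter dependence: $2^{O(k^4)}$ versus your $2^{2^{O(k)}}$, which comes from $p^{O(p)}$ with $p=2^{O(k)}$. It also yields a structural color bound of independent interest. Your approach is simpler and more general. Since you never move weight between edges, nothing in it uses that the pre-weights are $1$. Refine the types to triples $(T,F_1,F_0)$, with $F_1$ forced to $1$ and $F_0$ forced to $0$; there are at most $4^k$ types. Allow only patterns with $F_1\subseteq S\subseteq T\setminus F_0$. Then arbitrary $\{0,1\}$ pre-weights are handled verbatim.

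\textbf{A correction to your side remark.} Your sentence that the restriction to weight $1$ is ``exactly why pre-weights stay type-compatible'' is inaccurate. Likewise, ``hardness must come from elsewhere'' presupposes a hardness that your own argument rules out. The restriction is an artifact of the paper's exchange step, which must raise some $w(zy)$ from $0$ to $1$; that is impossible when $zy$ is pre-weighted $0$. Your method in fact shows that the general pre-weighted problem is FPT parameterized by vertex cover, which the paper leaves open in its conclusion.
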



\noindent
{\bf Related Work.}
As we stated above, the \onetwothree proposed by Karo\'nski, \L{}uczak, and
Thomason~\cite{karonski2004edge} in 2004, posited that the edges of any
graph---with no isolated edges---can be weighted from \(\{1,2,3\}\) to induce a
proper vertex coloring via incident sums. This elegant problem remained open
for two decades, attracting significant attention and spawning numerous
variants~\cite{seamone2012conjecture}. Initial progress included verification
for 3-colorable graphs~\cite{karonski2004edge}, followed by finite upper bounds
on the least maximum positive integer edge weight required for guaranteeing a
proper vertex coloring. First, Addario-Berry et al.~\cite{addario2007thirty}
established that the numbers \(\{1, 2, 3, \dotsc, 30\}\) are enough. This was
later improved to \(16\)~\cite{addario2008sixteen} and then
\(13\)~\cite{wang2008thirteen}. A major breakthrough by Kalkowski, Karo\'nski,
and Pfender~\cite{kalkowski2010five} reduced this number to \(5\), and Thomassen
showed that \(4\) was sufficient for bipartite graphs~\cite{thomassen2010four}.
The conjecture was finally resolved in the affirmative by
Keusch~\cite{keusch2024solution} in 2024.

Efforts to resolve the \onetwothree led to the study of allied problems,
including vertex-coloring edge-weightings with restricted weight sets. A key
variant is the two-weight case, such as \(\{1,2\}\) or 
\(\{0,1\}\)%
~\cite{dudek2011complexity,lu2015vertex}. This problem is NP-hard in
general~\cite{dudek2011complexity,bensmail2013vertex}, even for edge weights
\(\{0,1\}\)~\cite{dudek2011complexity}. Positive results include existence for
certain bipartite graphs~\cite{lu2015vertex,chang2011vertex} and
characterizations of non-colorable bipartite graphs and
trees~\cite{lyngsie2018neighbour,khatirinejad2012vertex}. Some structural
insights for general $\{a,b\}$-weightings have also been worked
out~\cite{madarasi2022vertex,thomassen2016flow}.

%
There are problems motivated by \onetwothree, like minimizing the maximum color  \cite{bensmail2021minimizing}, and minimzing label sum \cite{bensmail2022proper} 
among all proper weight function using weights from $\{1,2, \dots , k\}$, where $k \geq 2$ are studied. 
It is proved that both these problems are NP-hard and polynomial time solvable on graphs of bounded treewidth~\cite{bensmail2022proper, bensmail2021minimizing}. In other words, these problems have XP algorithms parameterized by treewidth.  



\noindent
{\bf Organization of the rest of the paper.} In the next section we list
our notations and terminology.
We derive our
hardness results in \autoref{sec:hardness}. We describe our algorithms in
Sections \ref{sec:FPT} and \ref{sec:XP}. 
The algorithm for \prvcew when parameterized by the vertex cover number is in the section \ref{sec:prewt}.
We summarize our results and list some open problems
in \autoref{sec:conclusion}.


\section{Preliminaries and Notations}

For $\ell \in {\mathbb N}=\{1,2,\ldots\}$, we use $[\ell]$ to denote the set $\{1,2,\ldots,\ell\}$. The set of all integers $n$, such that $s \leq n \leq t$ is denoted as $[s,t]$. For a function $f:D \xrightarrow{} R$ and $A\subseteq D$, $f|_{A}$ denotes the function $f$ restricted on $A$. 

\smallskip

\noindent 
{\bf Graphs. }All graphs considered in this paper are simple undirected graphs without self-loops. Let $G$ be a graph with $V(G)$ as the set of vertices and $E(G)$ as the set of edges. We use $n=|V(G)|$ to denote the number of vertices. An edge between two vertices $u$ and $v$ is denoted by $\{u,v\}$ as well as $uv$. 
Let $d_G(v)$ denote the degree of a vertex $v$ in the graph $G$, and $\Delta(G)$ denote the maximum degree in $G$.
The neighborhood of a vertex $v$ in $G$ is the set of vertices adjacent to $v$, and we denote it as $N_G(v)$. When the context is clear, we drop the subscript $G$ and write only $N(v)$. 
For a vertex $v \in V(G)$, $G-v$ denotes the induced subgraph of $G$ with vertex set $V(G) \setminus \{v\}$ and all the edges incident on $v$ are deleted. 
Let $u,v \in V(G)$, and currently $uv \notin E(G)$. A graph obtained by adding an edge $uv$ to existing graph $G$ is denoted by $G+uv$. 
Let $C_n$ denote the cycle graph on $n$ vertices. 
%
Treewidth of a graph gives a measure to understand how close a graph is to a tree structure. 

\begin{definition}\cite{cygan2015parameterized}
A tree decomposition of a graph $G$ is a pair $\mathcal{T}=(T, \{B_t\}_{t \in V(G)})$ where $T$ is a rooted tree and every node $t$ in $T$ corresponds to a subset $B_t \subseteq V(G)$, called a bag,  such that the following conditions hold.  
\begin{itemize}
\item For every edge $e \in E(G)$, there exists a bag where both endpoints of $e$ belong to that bag. 
\item For every $v \in V(G)$, the set of nodes $t \in V(T)$ such that $v \in B_t$ induces a nonempty subtree $T_v$ of $T$. 
\item $V(G)=\bigcup_{t\in V(T)} B_t$
\end{itemize}
The width of a decomposition is $max_{t \in V(T)}|B_t|-1$. The treewidth of a graph $G$, denoted by ${\sf tw}(G)$, is the minimum width over all possible tree decompositions of $G$. 
\end{definition}
In this paper, we use the notion of {\em nice tree decomposition} to design a dynamic programming. 

\begin{definition}[\cite{cygan2015parameterized}]
A tree decomposition $\mathcal{T}=(T, \{B_t\}_{t \in V(G)})$ of a graph $G$ is a nice tree decomposition if it satisfies the following properties. 
The bags corresponds to the root node and the leaf nodes are empty. Every non-leaf node is one of the four types mentioned below.
\begin{itemize}
    \item \textbf{Introduce vertex}: Let $t$ be a node with only one child $t'$ such that $B_t = B_{t'} \cup \{v\}$, for some $v \notin B_{t'}$. We say that the vertex  $v$ is introduced in this node.
    \item \textbf{Introduce Edge}: Let $t$ be a node with one child $t'$, such that $B_t=B_{t'}$, and $uv$ is an edge in $G$, with $u,v \in B_t$. Edge $uv$ is introduced in this node. 
    \item \textbf{Forget node}: Let $t$ be a node with exactly one child $t'$ such that $B_t = B_{t'} \setminus \{w\}$. We say $w$ is forgotten in this node. 
    \item \textbf{Join node}: Let $t$ be a node with two children $t_1$ and $t_2$ such that $B_t = B_{t_1} = B_{t_2}$. 
\end{itemize}
\end{definition}

\begin{lemma}[\cite{cygan2015parameterized}]
    If a graph $G$ admits a tree decomposition of width $tw$,
then it also admits a nice tree decomposition of width  $tw$. Moreover,
given a tree decomposition $\mathcal{T} = (T, \{B_t\}_{t \in V (T )} )$  of $G$ of width $tw$, one can in time $O(tw^2 \cdot max(|V (T )|, |V (G)|))$ compute a nice tree decomposition of $G$ of width $tw$ that has at most $O(tw \cdot |V (G)|)$ nodes.
\end{lemma}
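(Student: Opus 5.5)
This lemma is the standard result from~\cite{cygan2015parameterized}, so I would reproduce the usual constructive argument: make the decomposition small, root it, then locally replace each tree edge and each high-degree node by chains of simple nodes. First, preprocess the given decomposition so that no two adjacent bags satisfy $B_t \subseteq B_{t'}$. Whenever such a pair exists, contract the edge $tt'$ and keep the larger bag. This preserves all three axioms and the width. Afterwards each node introduces a vertex not seen in a neighbouring bag, and the standard counting argument gives $|V(T)| \le |V(G)|$. Then root $T$ at an arbitrary node; we will attach an empty root bag at the end via a chain of forget nodes.

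Next, handle branching. For a node $t$ with $d \ge 2$ children, replace $t$ by a binary tree with $d-1$ join nodes, all carrying the bag $B_t$, and hang the original children below it. Each child $c$ gets an intermediate copy of $B_t$ so that every join node has exactly two children with identical bags. For every tree edge between a parent bag $B$ and a child bag $B'$, insert a path of nodes. Going up from the child, first forget the vertices of $B' \setminus B$ one at a time, then introduce the vertices of $B \setminus B'$ one at a time. Below each leaf, attach a chain of introduce nodes from the empty bag; above the root, attach forget nodes down to the empty bag. All new bags are subsets of old bags or of $B \cup B'$ minus the forgotten part, so the width does not increase. A vertex's occurrence set stays connected because we only extend it along paths toward bags already containing it.

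For the introduce-edge nodes, for each edge $uv \in E(G)$ choose the topmost node $t$ (closest to the root) whose bag contains both $u$ and $v$. Directly above $t$, in the chain, insert an introduce-edge node with the same bag, so each edge is introduced exactly once.

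For the size and time bound, each path replacing an original tree edge has length at most $2(tw+1)$. Join gadgets add $O(\deg)$ nodes in total, and edge-introduce nodes add $|E(G)| = O(tw\cdot|V(G)|)$, since graphs of treewidth $tw$ are $tw$-degenerate. Altogether this gives $O(tw\cdot|V(G)|)$ nodes. Each operation manipulates bags of size $O(tw)$, which gives the stated running time $O(tw^2\cdot\max(|V(T)|,|V(G)|))$. I expect the main obstacle to be bookkeeping: keeping the bag-difference computations within the $tw^2$ factor (for example, by sorting bags), and charging the edge-introduce nodes correctly. The structural transformation itself is routine.
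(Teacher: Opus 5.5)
Your proposal is correct and follows the standard construction behind this lemma in Cygan et al.; the paper itself gives no proof and only cites that source. The steps match: contract containment edges to get $|V(T)|\le|V(G)|$, binarize with join nodes, insert forget-then-introduce paths, pad the leaves and root to empty bags, and place each edge's introduce node just above the topmost bag containing both endpoints, using $|E(G)|\le tw\cdot|V(G)|$. The one detail you leave implicit is doing the contractions within the time bound; a single top-down pass that merges a child into its parent whenever the child's bag is contained in the parent's bag handles this and already suffices for the counting argument.
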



\noindent
{\bf Parameterized Complexity.}
We refer to \cite{cygan2015parameterized} for a detailed overview of parameterized algorithms and complexity. 
A parameterized problem is a language $L \subseteq \Sigma^* \times \mathbb{N}$ , where $\Sigma$ is a fixed, finite  alphabet. For an instance $(x,k) \in \Sigma^* \times \mathbb{N}$, $k$ is called the parameter. A parameterized problem $L \subseteq \Sigma^* \times \mathbb{N}$  is called fixed-
parameter tractable (FPT) if there exists an algorithm $\mathcal{A}$ (called a {\em fixed-parameter algorithm}), a computable function $f :\mathbb{N} \to \mathbb{N}$, and a constant $c$ such that, given $(x, k) \in \Sigma^* \times \mathbb{N}$, the algorithm $\mathcal{A}$  correctly decides whether $(x, k) \in L$ in time bounded by $f(k) \cdot |x|^c$. The complexity class containing all fixed-parameter tractable problems is called FPT.
A parameterized problem $L \subseteq \Sigma^* \times \mathbb{N}$ is called slice-wise polynomial (XP) if there exists an algorithm $\mathcal{A}$ and two  computable functions $f, g : \mathbb{N} \to \mathbb{N}$ such that, given $(x, k) \in \Sigma^* \times \mathbb{N}$, the algorithm $\mathcal{A}$ correctly decides whether $(x, k) \in L$ in time bounded by $f(k)\cdot|x|^{g(k)}$. The complexity class containing all slice-wise polynomial problems is called XP. 
A kernelization algorithm, or kernel, for a parameterized problem $Q$ is a polynomial time  algorithm $\mathcal{A}$ that, given an instance $(I,k)$ of $Q$, returns an equivalent instance $(I',k')$ of $Q$ such 
that $|I'|+k\leq g(k)$, 
for some computable function $g$. 
To prove a problem is \WoneHard, we need to give a {\em parameterized reduction} from a \WoneHard problem.

\begin{definition}[\cite{cygan2015parameterized}]
Let $A, B \subseteq \Sigma^* \times \mathbb{N}$ be two
parameterized problems. A {\em parameterized reduction} from $A$ to $B$ is an FPT algorithm that, given an instance $(x, k)$ of $A$, 
and 
outputs an instance $(x', k')$ of $B$ 
such that 
\begin{itemize}
    \item $(x, k)$ is a yes-instance of $A$ if and only if $(x', k')$ is a yes-instance of $B$, 
    \item $k' \leq g(k)$ for some computable function $g$.
\end{itemize}

\end{definition}
\section{FPT algorithm parameterized by vertex cover number}
\label{sec:FPT}
In this section, we give an FPT algorithm for \vcew parameterized by the vertex
cover number of the input graph. That is, the input consists of a graph \(G\)
and a vertex cover \(S\) of \(G\) of size \(k\). Let \(I = V(G) \setminus S\) be
the corresponding independent set. 
Note that a vertex in \(I\) has degree at most \(k\),
and hence the maximum value of a color it can get through any edge-weighting
\(w: E(G)\rightarrow \{0,1\}\) is at most \(k\). Interestingly, if \(G\) has a
\emph{proper} edge-weighting then there exists a proper edge-weighting of \(G\)
such that the color of any vertex in \(S\) is bounded by \(O(k^2)\).

\begin{lemma}
\label{lem:boundedcolor}
Let graph \(G\) be a yes-instance of \vcew. If \(G\) has vertex cover number
\(k\), then there exists a proper edge-weighting \(\hat{w}:E(G) \to \{0,1\}\)
such that for all vertices \(v\in V(G)\), \({\sf color}_{\hat{w}}(v) \leq
8k^2+8k\).
\end{lemma}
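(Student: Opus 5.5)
Starting from an arbitrary proper weighting $w$, I would pick one that is extremal and show that an $S$-vertex with too large a color would let us modify $w$. Concretely, among all proper weight functions choose $w$ minimizing the total weight $\sum_{e} w(e)$; equivalently, minimize the number of edges of weight $1$ between $S$ and $I$. Edges inside $S$ number at most $\binom{k}{2}$ and add at most $k-1$ to any color, so the only source of large colors on a vertex $s\in S$ is its weight-$1$ edges to $I$. Suppose some $s\in S$ has ${\sf color}_w(s) > 8k^2+8k$. Then $s$ has more than $8k^2+7k$ neighbors $x\in I$ with $w(sx)=1$. I will show that one of these edges can be set to $0$ and the weighting stays proper, which contradicts minimality.

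Setting $w(sx)=0$ lowers the colors of $s$ and $x$ by one each, and leaves every other color unchanged. A conflict can only appear on an edge incident to $s$ or to $x$. For $x$, the new color ${\sf color}_w(x)-1$ must differ from the colors of its at most $k$ neighbors, all in $S$. For $s$, the new color ${\sf color}_w(s)-1$ must differ from the colors of its neighbors. Its neighbors in $S$ rule out at most $k$ values, and its neighbors in $I$ have colors at most $k$. Since ${\sf color}_w(s)-1 > 8k^2+8k-1 \ge k$, no $I$-neighbor can conflict with $s$. So the only possible bad event for $s$ is that ${\sf color}_w(s)-1$ equals the color of some $S$-neighbor of $s$. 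That single value is fixed, independent of $x$.

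The main obstacle is this $s$-side condition, because it does not depend on the choice of $x$. I would handle it by allowing flips of two edges at once. Set $w(sx)=w(sx')=0$ so that $s$ drops by $2$, or more generally drop $s$ by some $j\in[1,k+1]$. Among $k+1$ consecutive values at least one avoids all $S$-neighbor colors of $s$ after the drop, since there are at most $k$ of them. The remaining work is to choose the $j$ edges $sx_1,\dots,sx_j$ so that each $x_i$ is safe. Here $x_i$ is safe when ${\sf color}_w(x_i)-1$ differs from the colors of its $S$-neighbors after the change. Only $s$ among those neighbors changes, and $s$'s new color is larger than $k$, so it cannot collide with $x_i$. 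Hence $x_i$ is unsafe only if ${\sf color}_w(x_i)-1={\sf color}_w(t)$ for some $t\in S\setminus\{s\}$.

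To count the unsafe $x$, I would classify the weight-$1$ neighbors $x$ of $s$ by ${\sf color}_w(x)=c\in[1,k]$. A vertex $x$ of color $c$ is unsafe only if some $S$-vertex $t$ has color $c-1\le k-1$ and is adjacent to $x$. Such a $t$ has color at most $k-1$, so it has at most $k-1$ weight-$1$ edges. Any such $x$ with $w(tx)=1$ is therefore one of at most $k(k-1)$ vertices overall. The harder case is $x$ adjacent to such a $t$ with $w(tx)=0$. For this case I would consider a larger structure: if many $x$ share the same neighborhood type in $S$ (there are at most $2^k$ types, which is too many, so instead group by the at most $k$ low-color vertices $t$), then I would re-weight by swapping weights on the edges $sx$ and $tx$. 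This keeps $x$'s color and changes $s$ and $t$ by $\mp1$, so $t$ gains color and leaves the range that blocks $x$. I expect to spend the effort making this swap-and-count argument give a bound of $O(k^2)$, roughly $8k^2+8k$. The plan is to show that once $s$ has more than $8k^2+7k$ weight-$1$ edges into $I$, pigeonhole over the at most $k$ colors of $I$-vertices and the at most $k$ blocking $S$-vertices yields $k+1$ safe edges, or a swap that strictly decreases a suitable potential, for instance the lexicographic pair (total weight, $\sum_{s\in S}{\sf color}(s)^2$). That contradicts the choice of $w$.
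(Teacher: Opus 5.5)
\textbf{There is a genuine gap: the swap step.} The first half of your plan is sound and parallels the paper. Dropping a batch of \(j\le k\) edges \(sx\) with ``safe'' \(x\) is fine, because \(I\) is independent and the new color of \(s\) stays above \(k\). An unsafe \(x\) must have a blocker \(t\in S\setminus\{s\}\) adjacent to \(x\) with \({\sf color}_w(t)={\sf color}_w(x)-1\le k-1\). Your lexicographic potential would also serve as well as the paper's \(P\).

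The swap is where all the difficulty lies, and the mechanism you describe for it fails. Suppose \(x\) is blocked by \(t\) and \(w(tx)=0\). After setting \(w(sx)=0\) and \(w(tx)=1\), the color of \(x\) is still \({\sf color}_w(t)+1\), while the color of \(t\) has become exactly \({\sf color}_w(t)+1\). So \(t\) does not ``leave the range that blocks \(x\)''; it moves onto the color of \(x\), and the edge \(tx\) becomes a conflict. Swapping a few more edges does not fix this by itself. Every vertex blocked by \(t\) has color \({\sf color}_w(t)+1\), and the other \(I\)-neighbors of \(t\) may carry any color in \([0,k]\). The \(S\)-neighbors of \(t\) and of \(s\) impose further constraints. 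So the amount by which \(t\) goes up and \(s\) goes down must be chosen jointly.

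\textbf{The missing idea is a batch swap of carefully chosen size; this is the heart of the paper's proof.}
\begin{enumerate}
\item Let \(C\) be the set of at most \(2k+1\) colors used by \(w\).
\item Average over the at most \(k-1\) possible blockers. This gives a single \(t\) (the paper's \(z\)) that blocks more than \(8k+13\) of the weight-\(1\) \(I\)-neighbors of \(s\).
\item Since \({\sf color}_w(t)\le k-1\), at least \(7k+15\) of these satisfy \(w(tx)=0\). Your bound of \(k(k-1)\) on blocked vertices with \(w(tx)=1\) is true, but this is the count that actually matters, and it is what forces the threshold \(8k^2+8k\).
\item Pick \(\hat d\in[1,7k+1]\) with \({\sf color}_w(s)-\hat d\notin C\) and \({\sf color}_w(t)+\hat d\notin C\). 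Each condition excludes at most \(2k+1\) values, so such a \(\hat d\) exists.
\item Swap \(sx\) and \(tx\) simultaneously for \(\hat d\) of these vertices.
\end{enumerate}
This leaves the color of every vertex other than \(s\) and \(t\) unchanged. The two conditions on \(\hat d\) rule out conflicts between \(s\) or \(t\) and any other vertex. The inequality \({\sf color}_w(s)-\hat d\ge 8k^2+k>8k\ge{\sf color}_w(t)+\hat d\) rules out a conflict between \(s\) and \(t\). The potential then strictly drops. Without this joint choice of \(\hat d\), your plan of finding \(k+1\) safe edges ``or a swap'' does not close.
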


\begin{proof}
  Let \(S\) be a vertex cover of \(G\) with \(|S|=k\), and let \(I=V(G)\setminus
  S\) be the corresponding independent set. Set \(T=8k^2+8k\), and define the
  \emph{potential} of a proper edge-weighting \(w\) to be \(P(w) = \sum_{v \in
    V(G)}\max(0,\;{\sf color}_{w}(v)-T)\). The potential of \(w\) is thus the
  sum of the surpluses---if any---over \(T\), of the colors assigned by \(w\)
  to the vertices of \(G\). Let \(\hat{w}\) be a proper edge-weighting of \(G\)
  with the \emph{minimum potential} among all proper edge-weightings. We will
  show that \(P(\hat{w})=0\) holds, and this will imply the claimed bounds on
  \({\sf color}_{\hat{w}}(v)\).

  Assume for the sake of contradiction that \(P(\hat{w})>0\) holds. Then there
  is at least one vertex \(x\) for which \({\sf color}_{\hat{w}}(x)>T\) holds.
  Any vertex in the independent set \(I\) has degree at most \(k\)---it is
  adjacent only to vertices of \(S\)--- and so its color is at most \(k\). So
  the vertex \(x\) must be in the vertex cover \(S\). Let \(\bm{c} = {\sf
    color}_{\hat{w}}(x) > T\) be the color of \(x\) under \(\hat{w}\).

  Since \(x\) has at most \(k - 1\) neighbors in \(S\), the \(\hat{w}\)-weights
  of edges from \(x\) to other vertices in \(S\) can contribute at most \(k - 1\)
  to the color \(\bm{c}\). Let \(Y=\{y\in I : \hat{w}(xy) = 1\}\) be the set of all
  neighbors \(y\) of \(x\) in the independent set \(I\) for which \(\hat{w}\)
  assigns the weight \(1\) to the edge \(xy\). Then we have
\begin{equation}
    |Y| \geq \bm{c} - (k - 1) \geq T + 1 - (k - 1) = 8k^{2} + 7k + 2.
    \label{eqn:y-lower-bound}
  \end{equation}

  Let \(C\) be the set of all the colors that the edge-weighting \(\hat{w}\)
  gives to the vertices of \(G\). As noted above, each vertex in the independent
  set \(I\) has a color in the range \(\{0,1,\dots,k\}\). And since \(|S| =
  k\), \(\hat{w}\) can assign at most \(k\) different colors to the vertices in
  \(S\). So we get that \(|C| \leq (k + 1) + k = 2k + 1\) holds. Let \(D =
  \{d\in\{1,2,\dotsc,7k + 1\}\mid \bm{c} - d \notin C\}\) be the set of all
  integers \(d\) in the range \(\{1,2,\dotsc,7k + 1\}\) such that the integer
  \(\bm{c} - d\) is \emph{not} one of the colors assigned by the edge-weighting
  \(\hat{w}\) to some vertex of \(G\).

  Since (by construction) \(\bm{c} > 8k^{2} + 8k\) holds, the inequality
  \(\bm{c} > 7k + 1\) holds for all positive integers \(k\), and so all the
  values in the set \(\{\bm{c} - d \mid d \in \{1,2,\dotsc,7k + 1\}\}\) are
  positive integers. Since the values \(\bm{c} - d\) are distinct for different
  integers \(d\), we get that \(|D| \geq (7k + 1) - |C| \geq 7k + 1 - (2k + 1) =
  5k\) holds. Let \(d_{1}\) be the smallest element of the set \(D\). Since
  \(|C| \leq 2k + 1\) holds, we get that \(d_{1} \leq 2k + 2\) holds.

  Consider the following greedy procedure to reduce the color of vertex \(x\),
  by starting with the edge-weighting \(\hat{w}\) and changing the weights from
  \(1\) to \(0\) for some edges \(xy\) for vertices \(y \in Y\): Start by
  setting \(q = 0\), and then go over the vertices \(y \in Y\) one by one in
  some arbitrary order. When we consider a vertex \(y\) the color of vertex
  \(x\) is \(\bm{c} - q\). If setting the weight of edge \(xy\) to \(0\) would
  create a color conflict for vertex \(y\) with any of its neighbors (including
  vertex \(x\)), we skip \(y\); otherwise we change the weight of edge \(xy\) to
  \(0\) and increment \(q\) by \(1\). We stop when \(q\) becomes \(d_{1}\) or
  when all vertices of \(Y\) have been examined, whichever happens first. Let
  \(\tilde{w}\) denote the set of edge weights when this procedure stops.

  \begin{claim}
    \(q < d_{1}\) must hold at the end of this procedure.
    \label{claim:q-not-d1}
  \end{claim}
  \begin{proof}
    Suppose the procedure sets \(q = d_{1}\). Then (i) \({\sf color}_{\tilde{w}}(x) =
    \bm{c} - d_{1} \notin C\), (ii) for each edge \(xy\) whose weight was
    changed to \(0\) the colors of the vertices \(x\) and \(y\) decreased by
    \(1\) each, and (iii) all other vertex colors are unchanged.

    Since \({\sf color}_{\tilde{w}}(x) \notin C\), no neighbor of \(x\) in the
    vertex cover \(S\) gets the same color as \(x\) under the new edge-weighting
    \(\tilde{w}\). Since \({\sf color}_{\tilde{w}}(x) = \bm{c} - d_{1} \geq
    8k^{2} + 8k + 1 - (2k + 2) = 8k^{2} + 6k - 1 > k\) holds for all \(k \geq
    1\), no neighbor of \(x\) in the independent set \(I\) gets the same color
    as \(x\) under \(\tilde{w}\). Thus no neighbor of \(x\) in \(G\) gets the
    same color as \(x\) under \(\tilde{w}\).

    Consider a vertex \(y\in I\) for which the procedure changed the weight of
    edge \(xy\) to \(0\). This change was done only because at that point in the
    procedure it caused no color conflicts for \(y\) with any of its neighbors
    (which are all in the set \(S\)). Vertex \(x\) is only neighbor of \(y\)
    whose color may change later during the procedure, and from the previous
    paragraph we get that \({\sf color}_{\tilde{w}}(x) \neq {\sf
      color}_{\tilde{w}}(y)\) holds when the procedure ends. Thus the new
    edge-weighting \(\tilde{w}\) results in no color conflicts involving any
    vertex in the independent set \(I\).

    Thus we get that the new edge-weighting \(\tilde{w}\) is proper. And when going
    from \(\hat{w}\) to \(\tilde{w}\) the color of \(x\) dropped by \(d_{1}\geq 1\),
    while all other vertex colors either decreased or remained unchanged. Thus
    \(P(\tilde{w}) < P(\hat{w})\) holds, which contradicts the minimality of
    \(\hat{w}\).
  \end{proof}
  Let \(Y_{\text{dropped}} \subseteq Y\) denote the subset of vertices \(y\in
  Y\) such that the weight of the edge \(xy\) was changed from \(1\) to \(0\) by
  the above procedure. Since \(d_{1} \leq 2k + 2\) holds we get from
  \autoref{claim:q-not-d1} that \(|Y_{\text{dropped}}| = q \leq 2k + 1\) holds.
  Set \(Y' = Y\setminus Y_{\text{dropped}}\). Then from
  \autoref{eqn:y-lower-bound} we get
  \begin{equation}
    |Y'| = |Y| - |Y_{\text{dropped}}| \geq 8k^2 + 7k + 2 - 2k - 1 = 8k^2 + 5k + 1.
    \label{eqn:yprime-lower-bound}
  \end{equation}

  Pick an arbitrary vertex \(y \in Y'\). During the greedy procedure we considered
  dropping the weight of edge \(xy\) to \(0\), and found that this would create a
  color conflict between \(y\) and some neighbor of \(y\). At that point the
  color of \(x\) was at least \(\bm{c} - q \geq T + 1 - (d_{1} - 1) = T - d_{1} +
  2 \geq 8k^{2} + 8k - 2k - 2 + 2 = 8k^{2} + 6k > k + 2\) (for \(k \geq 1\)). So
  the conflicting neighbor of \(y\) could not have been \(x\). And since the
  procedure does not change the \(\hat{w}\)-color of this vertex \(y\)
  or of any vertex other than \(x\) in the vertex cover \(S\), we get that there
  exists a vertex \(z_{y} \in S\setminus\{x\}\) such that (i) \({\sf
    color}_{\hat{w}}(z_{y}) = {\sf color}_{\hat{w}}(y) - 1\) and (ii) \(z_{y}y\in
  E(G)\) both hold. Since \({\sf color}_{\hat{w}}(y)\leq k\)---because vertex
  \(y\) belongs to the independent set \(I\)---the first condition implies that
  \({\sf color}_{\hat{w}}(z_{y})\leq k - 1\) holds for this vertex \(z_{y} \in
  S\).

  Thus for every vertex \(\tilde{y} \in Y'\) there exists at least one vertex
  \(\tilde{z} \in S\setminus\{x\}\) such that (i) \({\sf
    color}_{\hat{w}}(\tilde{z}) = {\sf color}_{\hat{w}}(\tilde{y}) - 1\) and (ii)
  \(\tilde{z}\tilde{y} \in E(G)\) both hold. Now since \(|S\setminus\{x\}| \leq
  k-1\), we get from \autoref{eqn:yprime-lower-bound} and an averaging argument
  that there is a vertex \(z\in S\setminus\{x\}\) and a set \(Y''\subseteq Y'\)
  with \(|Y''| \geq \frac{8k^{2} + 5k + 1}{k - 1} > 8k + 13\) such that for
  \emph{every} \(y \in Y''\) both (i) \({\sf color}_{\hat{w}}(z) = {\sf
    color}_{\hat{w}}(y) - 1\) and (ii) \(zy \in E(G)\) hold. Let \(t = {\sf
    color}_{\hat{w}}(z)\); then \(t \leq k - 1\), and \({\sf color}_{\hat{w}}(y) =
  t + 1\) holds for all \(y \in Y''\). See Figure~\ref{vc} for an illustration. Yellow rectangle in the figure represents $Y''$.

\begin{figure}[h]
    \centering
 \includegraphics[width=0.7\linewidth]{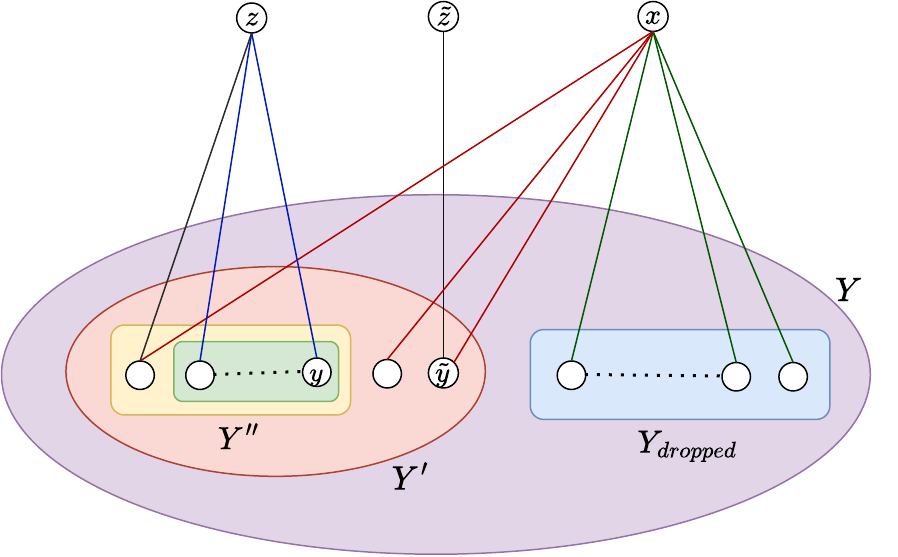}
    \caption{Illustration of proof of Lemma~\ref{lem:boundedcolor}}
    \label{vc}
\end{figure}

  Since \({\sf color}_{\hat{w}}(z) = t\), at most \(t \leq k - 1\) edges incident
  with the vertex \(z\) can have weight \(1\) in \(\hat{w}\). Hence among the
  edges \(zy\) with \(y \in Y''\), at least \(|Y''| - (k - 1) \geq 8k + 14 - (k -
  1) = 7k + 15\) have weight \(0\) in \(\hat{w}\). Let \(Y_{0} \subseteq Y''\) be
  a set of vertices where \(\hat{w}(zy) = 0\) for all \(y \in Y_{0}\), and with
  \(|Y_{0}| \geq 7k + 15\).

  Recall that \(|D| \geq 5k\), and that for all \(d \in D\), \(d \leq 7k + 1\)
  and \(\bm{c} - d \notin C\) both hold. Now since \(|C| \leq 2k + 1\) holds, at
  most \(2k + 1\) elements \(d \in D\) satisfy the condition \(t + d \in C\). So
  there are at least \(|D| - (2k + 1) \geq 3k - 1\) elements \(d\in D\) that
  satisfy the complementary condition \(t + d \notin C\). Choose the
  \emph{largest} element \(\hat{d} \in D\) which satisfies this latter
  condition. Observe that all of the following hold for this \(\hat{d}\) :
  \(\hat{d} \leq 7k + 1 \leq |Y_{0}|\), \(\hat{d} \geq (7k + 1) - (2k + 1) =
  5k\), \(\bm{c} - \hat{d} \notin C\), and \(t + \hat{d} \notin C\).

  Select an arbitrary subset \(Y_{1}\subseteq Y_{0}\) with \(|Y_{1}| = \hat{d}\). It is represented with green rectangle in Figure \ref{vc}.
  Starting with the original proper edge-weighting \(\hat{w}\), define a new
  edge-weighting \(w'\) as follows:
  \begin{itemize}
  \item For every \(y\in Y_{1}\), set \(w'(xy) = 0\) (was \(1\)) and \(w'(zy) =
    1\) (was \(0\)).
  \item Leave all other edge-weights unchanged; for all other edges \(e\), set
    \(w'(e) = \hat{w}(e)\).
  \end{itemize}

  \begin{claim}
 \(w'\) is a proper edge-weighting of \(G\).
    \label{claim:proper-edge-weighting}
  \end{claim}
\begin{claimproof}

    \begin{itemize}
    \item Vertex \(x \in S\): new color \(\bm{c} - \hat{d} \geq 8k^{2} + 8k + 1 -7k - 1 =
      8k^{2} + k\). The neighbors of \(x\) are in \(S\cup I\).
      \begin{itemize}
      \item New colors of the vertices in \(S\setminus \{x, z\}\) belong to the
        set \(C\), and since \(\bm{c} - \hat{d} \notin C\) there is no conflict
        between \(x\) and any of its neighbors in \(S\setminus \{z\}\). As for
        the vertex \(z\): its new color is \(t + \hat{d} \leq k - 1 + 7k + 1 =
        8k\) which is strictly smaller than the lower bound \(8k^{2} + k\)
        (derived above) for the new color of \(x\) for all \(k \geq 1\). So
        there is no conflict possible between \(x\) and \(z\).
      \item New colors of the vertices in \(I\) are at most \(k\) each, which
        is strictly smaller than the lower bound \(8k^{2} + k\) for the new
        color of \(x\) for all \(k \geq 1\). So there is no conflict possible
        between \(x\) and any vertex in \(I\).
      \end{itemize}
    \item Vertex \(z \in S\): new color \(t + \hat{d}\) satisfies \(5k \leq t +
      \hat{d} \leq 8k\). The neighbors of \(z\) are in \(S\cup I\).
      \begin{itemize}
      \item New colors of the vertices in \(S\setminus \{x, z\}\) belong to the
        set \(C\), and since \(t + \hat{d} \notin C\) there is no conflict
        between \(z\) and any of its neighbors in \(S\setminus \{x, z\}\). As
        for the vertex \(x\): its new color is \(\geq 8k^{2} + k\) which is
        strictly larger than the upper bound \(8k\) for the new color of \(z\),
        for all \(k \geq 1\). So there is no conflict possible between \(x\) and
        \(z\).
      \item New colors of the vertices in \(I\) are at most \(k\) each, which
        is strictly smaller than the lower bound \(5k\) for the new
        color of \(z\) for all \(k \geq 1\). So there is no conflict possible
        between \(z\) and any vertex in \(I\).
      \end{itemize}

    \item Vertices \(y\in Y_{1}\): new color \(t + 1 \leq k\) is the same as
      its \(\hat{w}\)-color. Its neighbors are all in the vertex cover \(S\).
      Of these, the new colors of the vertices in \(S\setminus \{x, z\}\) are
      the same as their \(\hat{w}\)-colors, and since \(\hat{w}\) was a proper
      edge-weighting, there are no conflicts possible between any such \(y\) and
      the vertices in \(S\setminus \{x, z\}\) under the new edge-weighting
      \(w'\), as well. The new color of vertex \(x\) is at least \(8k^{2} +
      k\), and the new color of vertex \(z\) is at least \(5k\). Both these
      numbers are larger than the upper bound \(k\) on the color of any vertex
      \(y \in Y\), and so there is no conflict possible between such \(y\) and
      any vertex in \(S\).
    \item All other vertices have the same colors as before, so no conflicts
      are possible involving them.\qedhere
    \end{itemize}

\end{claimproof}

  In going from \(\hat{w}\) to \(w'\) the color of vertex \(x\) decreased by
  \(\hat{d}\geq 5k > 5\) (for any \(k \geq 1\)), so its potential strictly
  drops. The color of \(z\) increased to \(t + \hat{d} \leq k - 1 + 7k + 1 = 8k
  < T\) (for any \(k \geq 1\)), so its potential remains \(0\). All other
  vertices have potentials unchanged. Hence \(P(w') < P(\hat{w})\),
  contradicting the minimality of \(\hat{w}\). So our assumption \(P(\hat{w}) >
  0\) must be false; we must have \(P(\hat{w}) = 0\). That is, under the
  edge-weighting \(\hat{w}\) every vertex has color at most \(T = 8k^{2} +
  8k\). 
\end{proof}

\begin{lemma}
\label{lem:kernelVC}
\vcew parameterized by the vertex cover number $k$, admits a kernel $(H,k')$ of size $O(4^k k^4)$ where $k'\leq k$. 
\end{lemma}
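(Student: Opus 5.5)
We kernelize by grouping the independent-set vertices into types and bounding the size of each type class. Fix the vertex cover \(S\) with \(|S|=k\) (a 2-approximation suffices; replace \(k\) by \(2k\) if needed) and let \(I=V(G)\setminus S\). Partition \(I\) into classes \(I_A=\{v\in I : N(v)=A\}\) for \(A\subseteq S\); there are at most \(2^k\) classes. The reduction rule: if some class \(I_A\) has more than \(M=c\,k^3\) vertices (constant \(c\) fixed below), delete vertices of \(I_A\) until exactly \(M\) remain. The resulting graph \(H\) still has \(S\) as a vertex cover, so \(k'\le k\). It has at most \(k+2^k M\) vertices and at most \(k^2+2^k M k\) edges. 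This gives size \(O(2^k k^4)\), which is within the claimed \(O(4^k k^4)\); the slack allows, if needed, a finer split of classes or a threshold \(M=O(2^k k^3)\).

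For correctness, we use \Cref{lem:boundedcolor} in both directions. Let \(T=8k^2+8k\).

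\emph{Forward direction.} Suppose \(G\) is a yes-instance. By \Cref{lem:boundedcolor}, take a proper \(\hat w\) with all colors at most \(T\).
1. For each \(x\in S\), at most \(T\) edges at \(x\) have weight \(1\).
2. Hence in \(I_A\) at most \(kT=O(k^3)\) vertices are incident to any weight-1 edge. Every other vertex of \(I_A\) has color \(0\).
3. A vertex of color \(0\) whose neighbours in \(S\) all have nonzero color causes no conflict.
4. Choose \(M>kT\), and keep in \(H\) all the "active" vertices of \(I_A\) together with some color-0 vertices.
5. Restricting \(\hat w\) to \(H\) leaves the colors of \(S\) and of the kept vertices unchanged, so the restriction is proper.
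The degenerate case \(A=\emptyset\) consists of isolated vertices and is irrelevant.

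\emph{Reverse direction.} Suppose \(H\) has a proper weighting. Apply \Cref{lem:boundedcolor} to \(H\), which has vertex cover number at most \(k\), to get one with colors at most \(T\). Again at most \(kT\) vertices of each class \(I_A\) in \(H\) are active. Since \(M>kT\), at least one vertex \(u\in I_A\) in \(H\) has color \(0\) and no conflict with its neighbours. Extend the weighting to \(G\) by giving weight \(0\) to all edges of every deleted vertex \(v\in I_A\).
1. The colors of \(S\) are unchanged.
2. Each \(v\) gets color \(0\) and has exactly the same neighbourhood as \(u\).
3. Therefore \(v\) conflicts with nothing, and the extension is proper.

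The main obstacle is the corner case \(k=0\) or \(|A|\) small, together with checking that deleting vertices does not create isolated edges. The paper assumes there are none: a class \(I_A\) with \(|A|=1\) attached to a degree-one cover vertex could become problematic. To handle it, always keep at least \(\max(M,2)\) vertices, so degrees in \(S\) never drop to \(1\) when they were larger. Pinning down the exact constant \(c\) with \(M>kT=8k^3+8k^2\), say \(M=8k^3+8k^2+1\), then gives the stated kernel bound.
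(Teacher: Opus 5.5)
Your proposal is correct and is essentially the paper's proof: you split $I$ into classes by neighborhood, truncate each class to just over $|S|\cdot T$ vertices, use Lemma~\ref{lem:boundedcolor} plus pigeonhole for the forward direction, and extend by zero weights using a surviving color-$0$ twin for the reverse direction. Two small things to tighten: first, a kernelization must run in polynomial time, so you should actually use the 2-approximate cover (the paper takes the endpoints of a maximal matching), which gives $4^k$ classes, threshold $2kT+1$, and exactly the stated $O(4^k k^4)$ bound; second, the forward direction needs the easy remark, which the paper states explicitly, that vertices in one class are false twins, so deleting arbitrary ones yields a graph isomorphic to the one in which all active vertices are kept.
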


\begin{proof}
    Let $(G,k)$ be the input instance. That is, the vertex cover number of $G$ is $k$.  We compute a maximal matching of the graph, say $M$. It can be computed using a greedy algorithm in polynomial time in $O(nk)$. Let $S$ denote the set of endpoints of edges in $M$. As vertex cover number is $k$, $|S| \leq 2k$. For every vertex in $I = V(G) \setminus S$, we define a relation as follows. Vertex $u$ is related to vertex $v$ if $N(u)=N(v)$. This is an equivalence relation, and partitions $I$ into at most $2^{2k}$ equivalence classes. For a subset $S' \subseteq S$, let $A_{S'}$ denote an equivalence class of $I$ corresponding to $S'$. That is, for every vertex $u \in A_{S'}$, $N(u) =S'$, and $|S'| \leq 2k$.
As $I$ is very large, there may be many equivalence classes of large size. First, we show that we can consider only a bounded number of vertices from each equivalence class.

\begin{claim}
\label{claimvcred}
    If there is a large equivalence class $A_{S'}$, with $|A_{S'}| > 2k(8k^2+8k)$, then all but  $2k(8k^2+8k)$ vertices will get color $0$ by any proper weight function (if it exists) that satisfies Lemma~\ref{lem:boundedcolor}.
\end{claim} 
\begin{claimproof}
For contradiction, assume there is a proper weight function $w$ that satisfies Lemma \ref{lem:boundedcolor}, and the claim is not true. That is, for each vertex $v\in V(G)$, ${\sf color}_w(v)\leq 8k^2+8k$. 
Then, there will be more than $2k(8k^2+8k)$ vertices with color at least $1$, implying that there is at least one edge incident on each of them of weight $1$. Note that their neighbors are only in $S'$, with $|S'| \leq 2k$. Then by pigeonhole principle, there will be at least one vertex $x \in S'$, with more than $8k^2+8k$ edges of weight $1$ incident on it from $A_{S'}$. Hence, ${\sf color}_w(x) > 8k^2+8k$, a contradiction to the fact that $w$ satisfies Lemma~\ref{lem:boundedcolor}. 
\end{claimproof}
The above claim leads to the following 
reduction rule. 

\smallskip
\noindent
\textbf{Reduction Rule}: For every large equivalence class $A$, with $|A| > 2k(8k^2+8k)+1$, delete all but $2k(8k^2+8k)+1$ vertices from $A$.

Let $H$ be the graph obtained after applying the above reduction rule and $H$ is the output of our kernelization algorithm. Next, we prove that $G$ has a proper weight function if and only if $H$ has a proper weight function. Since for any two vertices $u_1$ and $u_2$ in the same equivalence class, $G-u_1$ is isomorphic to $G-u_2$, by Claim~\ref{claimvcred}, if $G$ has a proper weight function, then $H$ has a proper weight function. 

Conversely, suppose $H$ has a proper weight function $w$ that satisfies Lemma~\ref{lem:boundedcolor}. Then, we define a weight function $w'$ on $E(G)$ as follows. For all $e\in E(H)$, $w'(e)=w(e)$ and for all $e\in E(G)\setminus E(H)$, $w(e)=0$. Observe that for all $v\in V(H)$, ${\sf color}_{w'}(v)={\sf color}_{w}(v)$ and for all $u\in V(G)\setminus V(H)$, ${\sf color}_{w'}(u)=0$. Next, we prove that $w'$ is proper. Let $e=uv$ be an edge in $E(G)$. We have two cases. The first case is $e\in E(H)$. Then, we know that $u,v\in V(H)$ and ${\sf color}_{w'}(v)={\sf color}_{w}(v)$ and ${\sf color}_{w'}(u)={\sf color}_{w}(u)$. Since $w$ is a proper weight function, 
${\sf color}_{w}(u)\neq {\sf color}_{w}(v)$. Thus, we get that ${\sf color}_{w'}(v)\neq {\sf color}_{w'}(u)$
In the second case $e=uv\in E(G)\setminus E(H)$. Then, one endpoint of $e$ is in $V(H)$ and the other is in $V(G)\setminus V(H)$. Let $v\in V(H)$ and $u\in V(G)\setminus V(H)$. We know that ${\sf color}_{w'}(u)=0$ and ${\sf color}_{w'}(v)= {\sf color}_{w}(v)$. To prove 
${\sf color}_{w'}(v)\neq {\sf color}_{w'}(u)$, it is enough to prove that ${\sf color}_{w'}(v)\neq 0$. Let $u$ belongs  to the equivalence class $A_{S'}$ where $S'=N_G(u)$. Clearly $v\in S'$ and there exactly $2k(8k^2+8k)+1$ vertices $x$ in $V(H)$ such that $N_H(x)=S'$. Since $w$ satisfies Lemma~\ref{lem:boundedcolor}, 
there is a vertex in $x\in A\cap V(H)$, such that that ${\sf color}_{w}(x)=0$. 
This implies that ${\sf color}_{w}(v)\neq 0$ because $xv\in E(H)$.


Now we bound the size of $H$. There are at most $2^{2k}$ equivalence classes, each of size at most $2k(8k^2+8k)+1$. Hence, the number of vertices in $H$ is at most $2k+2^{2k} \cdot (2k(8k^2+8k)+1)=O(4^k k^3)$. Since the vertex cover number of $H$ is at most $k$, the number of edged in $H$ is $O(4^k k^4)$. 

\end{proof}

\begin{theorem}
There is an FPT algorithm of running time 
$2^{O(k^4)}n$ for \vcew parameterized by the vertex cover number $k$.
\end{theorem}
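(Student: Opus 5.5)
The plan is to combine the kernel of Lemma~\ref{lem:kernelVC} with brute force on the kernel. Given $G$, we first compute a maximal matching $M$ greedily in $O(nk)$ time; its endpoint set $S$ has $|S|\le 2k$ and is a vertex cover. If $|M|>k$ we may reject the parameter value, since then the vertex cover number exceeds $k$. Next we group the vertices of $I=V(G)\setminus S$ by their neighborhoods in $S$. Each vertex of $I$ has at most $2k$ neighbors, so we can compute a key for it (a bitmask over $S$) and bucket by that key in $O(nk)$ total time, treating $k$-word operations as the $f(k)$ overhead. The Reduction Rule then keeps at most $2k(8k^2+8k)+1$ vertices per class. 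This produces the kernel $H$ in time $O(f(k)\cdot n)$, with $|V(H)|=O(4^kk^3)$ and $|E(H)|=O(4^kk^4)$. By Lemma~\ref{lem:kernelVC}, $G$ is a yes-instance if and only if $H$ is.

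To decide $H$, I would not enumerate all $2^{|E(H)|}$ weightings, since that would give a $2^{O(4^k k^4)}$ bound rather than the claimed $2^{O(k^4)}$. Instead I would enumerate a more structured object whose size is polynomial in $k$ in the exponent. By Lemma~\ref{lem:boundedcolor}, applied to $H$ (whose vertex cover number is at most $k$), we may assume every color is at most $T=8k^2+8k$. The key observation is that the weighting restricted to the edges inside $S$ involves at most $\binom{2k}{2}=O(k^2)$ edges. We can also guess the color $c(s)\in[0,T]$ of every $s\in S$, which gives $(T+1)^{2k}=2^{O(k\log k)}$ choices.

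Once the inner weights and the target colors of $S$ are fixed, the remaining task concerns vertices of $I$. Each class $A_{S'}$ consists of interchangeable vertices, so a weighting of the edges between $A_{S'}$ and $S'$ is described, up to symmetry, by how many vertices of the class use each pattern $P\subseteq S'$ of weight-$1$ edges. A pattern $P$ gives the vertex color $|P|$. Such a vertex is conflict-free exactly when $|P|\neq c(s)$ for all $s\in S'$. Each $s$ receives $|\{\text{vertices whose pattern contains } s\}|$ from $I$, and this must equal $c(s)$ minus the inner contribution.

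The hardest step is to bring the time for this part down to $2^{O(k^4)}$. Since each $c(s)\le T=O(k^2)$, only $O(k^3)$ vertices of $I$ in total can have a nonempty pattern; every other vertex of $I$ has color $0$. So I would guess, for each $s\in S$, the multiset of at most $T$ ``partner'' vertices. Concretely, I would guess the set $W$ of at most $2kT=O(k^3)$ edges of weight $1$ between $S$ and $I$. Each such edge is chosen among the $O(4^kk^4)$ kernel edges, so there are $(4^kk^4)^{O(k^3)}=2^{O(k^4)}$ choices. All other edges get weight $0$.

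For each guess we check properness in time polynomial in $|H|$. The combined search therefore runs in $2^{O(k^4)}$ time, and adding the $O(f(k)n)$ kernelization gives a total of $2^{O(k^4)}n$, assuming the kernel construction is linear in $n$ for fixed $k$. Correctness follows from Lemma~\ref{lem:boundedcolor}: some proper weighting has at most $T$ weight-$1$ edges at each vertex of $S$, so its weight-$1$ edge set has the guessed form.
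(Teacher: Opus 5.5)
Your proposal is correct and follows essentially the same route as the paper: compute the kernel $H$ of Lemma~\ref{lem:kernelVC}, use Lemma~\ref{lem:boundedcolor} to bound the number of weight-$1$ edges of some proper weighting of $H$ by $O(k^3)$ (the paper counts via a minimum vertex cover to get $k(8k^2+8k)$, you count via $S$ to get $2kT$), and enumerate all such edge subsets among the $O(4^kk^4)$ kernel edges, giving $2^{O(k^4)}$ candidates, each checked in time polynomial in $|H|$. Your extra guesses of the colors $c(s)$ and of the weights inside $S$ are harmless but unnecessary, since enumerating all subsets of $E(H)$ of size $O(k^3)$ already covers them.
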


\begin{proof}
Let $G$ be the input graph. First we apply the kernelization algorithm and get a kernel $H$ of size $O(4^k k^4)$. We also know that the vertex cover number of $H$ is at most $k$. If $H$ is a yes-instance, then by Lemma~\ref{lem:boundedcolor}, there is a weight function $w$ such that 
the color induced on every vertex is at most $8k^2+8k$. This implies that the number of edges assigned one by $w$ is at most $k(8k^2+8k)$. So we choose all edge subsets of $E(H)$ of size at most $k(8k^2+8k)$ and test in linear time if setting those edges one and the remaining zero is a proper weight function or not. 
The number of edge subsets of size at most $k(8k^2+8k)$ in $E(H)$ is at most $2^{O(k^4)}$. Thus, the total running time follows.   
\end{proof}

\section{XP algorithm parameterized by treewidth}
\label{sec:XP}
In this section, we provide an XP algorithm for solving the \vcew  with treewidth as the parameter. 
If a graph $G$ is a yes-instance of \vcew, then there is a proper weight function $w$ that induces a proper coloring. There can be several such proper weight functions. 
For a proper weight function $w$, we define a spanning subgraph $H^w$ of $G$ as follows. The vertex set of $H^w$ is $V(H^w)=V(G)$ and the edge set is $E(H^w)=\{e\in E(G)~:~w(e)=1\}$. 
If a vertex $v$ has $d$ of its incident edges assigned weight $1$ by $w$, then its color is $d$, and only these $d$ edges, will be in the subgraph $H^w$. Hence, ${\sf color}_w(v) = d_{H^w}(v) =d$. As $w$ is a proper weight function, any two adjacent vertices in $G$ have different colors. Equivalently, any two adjacent vertices of $G$ have distinct degrees in $H^w$.
We call such a spanning subgraph $H$ as a {\em solution} to $G$. 
A proper weight function and its corresponding spanning sugraph $H^w$ are equivalent. In the sense that if $w$ is given, then there is a unique spanning subgraph $H^w$ that is associated with $w$. And if some spanning subgraph $H$ is given that satisfies the condition that any two adjacent vertices in $G$ have distinct degrees in $H$, we can uniquely construct the corresponding proper weight function.
We develop an algorithm that provides a spanning subgraph $H$, such that degree of adjacent vertices is distinct, if it exists. We can get a proper weight function $w$ by setting weight $1$ to all the edges in $H$, and $0$ otherwise.

So, our algorithm outputs a subgraph $H^w$ if one exists.
We use dynamic programming to construct such a subgraph in a bottom-up approach. 
We utilize a nice tree decomposition 
for this purpose. 
Let $G$ be the input graph and treewidth of $G$ is $tw$. We assume that a tree decomposition
tree decomposition $\mathcal{T} = (T, \{B_t\}_{t \in V (T )} )$  of $G$, of width $tw$ is given as part of the input. Because there is an algorithm that given a graph $G$ and integer $tw$, runs in time $2^{O(tw^2)}n^{O(1)}$ and outputs a tree decomposition of $G$ of width at most $tw$, if it exists~\cite{TWDT}.
For a node $t\in V(T)$, 
let $G_{t}$ be the subgraph of $G$ on the vertices that appear in the bags of the subtree of $T$ rooted at $t$.

In dynamic programming, to reach a solution graph $H$, we store, at each node, a set of partial solutions. Clearly, a partial solution at a node $t$ is a spanning subgraph $H_t$ of $G_t$ that satisfy some properties. Before explaining those properties, we need define a {\em state} at a node $t\in V(T)$. Towards that let $H$ is a solution to $G$. Let $E_t=E(G_t)\cap E(H)$. Consider the graph $H_t=(V(G_t),E_t)$. Clearly, we want $H_t$ to be a partial solution at $t$.   
%
%
%
%
%
%
For a vertex $v$ in the bag $B_{t}$, we store two numbers $FD(v)$, the (final) degree of $v$ in $H$. And $CD_t(v)$, the (current) degree in $H_t$. 
A pair of final degree and current degree is {\em feasible} if $CD(v) \leq FD(v)$. 
Let $P_v$ denote the set of all feasible pairs for a vertex $v$. That is, $P_v = \{(0,0),(1,0),(2,0), (2,1), (2,2), \dots, (d_G(v), d_G(v))\}$. 
Let $\Delta$ be the maximum degree in $G$. 
Now we define a {\em state} at a node $t$. 

\begin{definition}
 A {\em state} at a node $t$ is a function  $f ~:~ B_t \xrightarrow{}\{0,1,\ldots, \Delta\} \times \{0,1,\ldots, \Delta\}$ such that for all $v\in B_t$, $f(v)\in P_v$.   
\end{definition}

\begin{definition}
\label{def:part}
Let $f$ be a state at a node $t\in V(T)$. A spanning subgraph $H_t$ of $G_t$ is a {\em partial solution of state $f$} at $t$ if the following holds.   
\begin{enumerate}
    \item For each $v\in B_t$, $f(v)=(x,d_{H_t}(v))$ for some $x\in \{0,1,\ldots, \Delta\}$. 
    \item For each edge $uv\in E(G_t)$ the following holds. 
    \begin{itemize}
        \item[$(i)$] If $u,v\notin B_t$, then  $d_{H_t}(u)\neq d_{H_t}(v)$. 
        \item[$(ii)$] If $u\in B_t$ and $v\notin B_t$, then $d_{H_t}(v)\neq x$, where $(x,d_{H_t}(u))=f(u)$. 
        \item[$(iii)$] If $u,v\in B_t$, then $x_1\neq x_2$, where $(x_1,d_{H_t}(u))=f(u)$ and $(x_2,d_{H_t}(v))=f(v)$.  
    \end{itemize}
\end{enumerate}
\end{definition}

\begin{definition}
Let $f$ be a state at a node $t\in V(T)$ 
and $H_t$ is a partial solution of state $t$. We say that $(H_t,f)$ 
is {\em extendable} if there is a solution $H$ to $G$ such that the following holds. 
\begin{itemize}
    \item $H_t$ is a subgraph of $H$ and $E(H)\cap E(G_t)=E(H_t)$. 
    \item For all $v\in B_t, f(v)=(d_{H}(v),d_{H_t}(v))$. 
\end{itemize}
\end{definition}

\begin{lemma}
\label{lem:PSext}
Let $f$ be a state at a node $t\in V(T)$. Let $H_1$ and $H_2$ be two partial solutions of state $f$ at $t$. If $(H_1,f)$ is extendable, then $(H_2,f)$ is extendable.   
\end{lemma}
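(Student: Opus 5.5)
The plan is the standard exchange (replacement) argument. Let $H$ be a solution to $G$ witnessing that $(H_1,f)$ is extendable. Thus $E(H)\cap E(G_t)=E(H_1)$ and $f(v)=(d_H(v),d_{H_1}(v))$ for all $v\in B_t$. We build
\[
H' = \bigl(V(G),\, (E(H)\setminus E(G_t)) \cup E(H_2)\bigr),
\]
that is, we keep the part of $H$ outside $G_t$ and replace the inside by $H_2$. It then suffices to show that $H'$ is a solution and that it witnesses extendability of $(H_2,f)$. The subgraph condition is immediate, since $E(H')\cap E(G_t)=E(H_2)$ because $E(H_2)\subseteq E(G_t)$.

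The first step is to compute degrees in $H'$. I will use the tree-decomposition property that every edge of $G$ with an endpoint in $V(G_t)\setminus B_t$ lies in $E(G_t)$, and that such a vertex has no neighbours outside $V(G_t)$ (the separator property of $B_t$).

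\begin{itemize}
\item For $v\notin V(G_t)$, no incident edges changed, so $d_{H'}(v)=d_H(v)$.
\item For $v\in V(G_t)\setminus B_t$, all incident edges lie in $G_t$, so $d_{H'}(v)=d_{H_2}(v)$.
\item For $v\in B_t$, we get $d_{H'}(v)=d_{H_2}(v)+(d_H(v)-d_{H_1}(v))$. Both $H_1$ and $H_2$ are partial solutions of the same state $f$, so $d_{H_1}(v)=d_{H_2}(v)$, which is the second coordinate of $f(v)$. Hence $d_{H'}(v)=d_H(v)$, the first coordinate of $f(v)$.
\end{itemize}

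This already gives the second extendability condition, namely $f(v)=(d_{H'}(v),d_{H_2}(v))$.

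The second step checks properness edge by edge for $uv\in E(G)$, by cases.

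\begin{itemize}
\item If both endpoints lie outside $V(G_t)$, or one is outside $V(G_t)$ and the other is in $B_t$, then both degrees equal their $H$-degrees, and $H$ is a solution.
\item If $u,v\in B_t$, their $H'$-degrees are the first coordinates $x_1,x_2$ of $f(u),f(v)$, and these differ by condition 2(iii) of Definition~\ref{def:part} for $H_2$.
\item If $u\in B_t$ and $v\in V(G_t)\setminus B_t$, then $d_{H'}(u)=x$ and $d_{H'}(v)=d_{H_2}(v)$, and these differ by 2(ii).
\item If both $u,v\in V(G_t)\setminus B_t$, the degrees are the $H_2$-degrees, which differ by 2(i).
\end{itemize}

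The case of $u\in V(G_t)\setminus B_t$ and $v\notin V(G_t)$ cannot occur, by the separator property.

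The main obstacle is justifying this separator fact and the degree bookkeeping carefully. Note that $G_t$ as defined is the subgraph on the vertices in the subtree. For introduce-edge decompositions, one should interpret $E(G_t)$ either as all edges of $G$ among $V(G_t)$, or as the introduced edges. I will fix the interpretation that every edge of $G$ incident to a forgotten vertex belongs to $E(G_t)$. This holds because such an edge must appear in some bag containing the forgotten vertex, and all those bags lie in the subtree below $t$. Everything else is routine verification.
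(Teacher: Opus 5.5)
Your proposal is correct and follows the paper's proof essentially step for step. It uses the same replacement graph: the paper writes its edge set as $E(H_2)\cup(E(H)\setminus E(H_1))$, which equals yours because $E(H)\cap E(G_t)=E(H_1)$. It also uses the same edge-by-edge case analysis, and your explicit use of the separator property settles the case $u\in V(G_t)\setminus B_t$, $v\notin V(G_t)$, which the paper leaves implicit.

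One small adjustment: for $u,v\in B_t$ the paper argues via $d_{H'}=d_H$ on $B_t$ and the properness of $H$, not via 2(iii). That is the safer route. Under the introduce-edge semantics the paper actually uses (its join case assumes $E(G_{t_1})\cap E(G_{t_2})=\emptyset$), an edge between two bag vertices need not yet lie in $E(G_t)$, so 2(iii) may not apply. Your own degree computation already gives $d_{H'}=d_H$ on $B_t$, so the fix costs nothing.
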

\begin{proof}
    Since $(H_1,f)$ is extendable, there is a solution $H$ to $G$ such that following holds. 
\begin{itemize}
    \item[$(i)$] $H_1$ is a subgraph of $H$ and $E(H)\cap E(G_t)=E(H_1)$. 
    \item[$(ii)$] For all $v\in B_t, f(v)=(d_{H}(v),d_{H_1}(v))$. 
\end{itemize}
Now we define a spanning subgraph $H'$ of $G$ as follows. The vertex set of $H'$ is $V(G)$ and the edge set is $E(H_2)\cup (E(H)\setminus E(H_1))$. 
From $(i)$, we get that $E(H')\setminus E(G_t)=E(H)\setminus E(G_t)$. 
This implies that $E(H')\cap E(G_t)=E(H_2)$ and $H_2$ is a subgraph of $H'$. From $(ii)$ and the fact that both $H_1$ and $H_2$ are of state $f$, we get that for any vertex $v\in B_t$, $f(v)= (d_{H}(v),d_{H_1}(v))=(d_{H}(v),d_{H_2}(v))$. 

Finally, we prove that $H'$ is indeed a solution to $G$.  
That is, we want to prove that for any edge $uv \in E(G)$, $d_{H'}(u) \neq d_{H'}(v)$. First, we prove that  $(a)$ for any vertex $x\in B_t$, $d_{H}(x)=d_{H'}(x)$. Let $x\in B_t$. Since both $H_1$ and $H_2$ are partial solutions of state $f$ at $t$, $d_{H_1}(x) = d_{H_2}(x)$. Also, $E(H')\setminus E(G_t)=E(H)\setminus E(G_t)$. 
With these two equations, we see that $d_H(x) = d_{H'}(x)$. Now fix an edge $uv \in E(G)$. Now we have the following cases. 
\begin{itemize}
    \item Suppose $u,v \in B_t$. Then $d_H(u) = d_{H'}(u)$ and $d_H(v) = d_{H'}(v)$, by statement $(a)$ above. As $H$ is a solution to $G$, so $d_H(u) \neq d_{H}(v)$. Hence, $d_{H'}(u) \neq d_{H'}(v)$. 
\item If $u,v \in V(G) \setminus V(G_t)$, then because of $E(H')\setminus E(G_t)=E(H)\setminus E(G_t)$, $d_H(x) = d_{H'}(x)$, for $x \in \{u,v\}$. But $H$ is a solution, so $d_H(u) \neq d_{H}(v)$. Hence, $d_{H'}(u) \neq d_{H'}(v)$.
\item If $u,v \in V(G_t) \setminus B_t$, then $d_{H_2}(x) = d_{H'}(x)$, for $x \in \{u,v\}$. As $H_2$ is a partial solution, $d_{H_2}(u) \neq d_{H_2}(v)$. And so $d_{H'}(u) \neq d_{H'}(v)$.
\item If $u \in B_t$, and $v \in V(G) \setminus V(G_t)$, 
then $d_H(u) = d_{H'}(u)$ (because of $(a)$) and $d_H(v) = d_{H'}(v)$ (because $E(H')\setminus E(G_t)=E(H)\setminus E(G_t)$). Thus, since $H$ is a solution, we have $d_H(u) \neq d_H(v)$ and this implies $d_{H'}(u) \neq d_{H'}(v)$. 
\item Suppose $u \in B_t$ and $v \in V(G_t) \setminus B_t$. By statement $(a)$, we have $d_H(u)=d_{H'}(u)$. By statement $(ii)$ above, we have $f(u)=(d_H(u),d_{H_1}(u))$. Also, since $H_2$ is a partial solution of state $f$ at $t$, we get $d_{H_2}(v)\neq d_{H}(u)$. This follows from condition $2(ii)$ of Definition~\ref{def:part}. This implies that $d_{H_2}(v)\neq d_{H'}(u)$, 
because $d_H(u)=d_{H'}(u)$. Since 
$d_{H_2}(v)=d_{H'}(v)$, we get $d_{H'}(v)\neq d_{H'}(u)$. \
\end{itemize}
\end{proof}

Because of Lemma~\ref{lem:PSext}, for any node $t\in V(T)$ and state $f$ at $t$, we only need to store one partial solution of state $f$ (if it exists). 
For convenience, we say that for a node $t$ with $B_t=\emptyset$, there is exactly one state which is the empty function $f_{\emptyset}~:~ \emptyset \xrightarrow{} \{0,1,\ldots,\Delta\}\times \{0,1,\ldots,\Delta\}$.    
Clearly, the output is the partial solution of state $f_{\emptyset}$ stored at the root node (if it exists).

Next we explain how to compute those partial solution. 
Our dynamic programming(DP) algorithm will create a DP table 
${\cal D}$. This table is indexed with nodes $t$ in $T$ and states at node $t$. 
That is, for every node $t$ and a state $f$ at $t$, we compute and store a partial solution $H_f$ of state $f$ at $t$ in the table entry ${\cal D}[t,f]$, if it exists. Otherwise we store $\bot$ at ${\cal D}[t,f]$. As mentioned earlier we do a dynamic programming over $T$ in a bottom up fashion. Let $t$ be a node in $T$ such that we have computed 
${\cal D}[t',f']$ for all descendent  $t'$ of $t$ and 
state $f'$ at node $t'$
Now we explain how to compute partial solutions at node $t$.

\begin{itemize}
    \item \textbf{Case 1: $t$ is a leaf node.} There is only one state at node $t$ which is the empty function $f_{\emptyset}$. We store ${\cal D}[t,f_{\emptyset}]:=(\emptyset,\emptyset)$. Here, $(\emptyset,\emptyset)$ is the empty graph.  
    \item \textbf{Case 2: $t$ is an introduce vertex node.} Let $v$ be the vertex that is introduced in $t$ and $t'$ be the child of $t$. We know that $d_{G_t}(v)=0$. Let $f$ be an arbitrary state at $t$. Let $f(v)=(x,y)$. If $y>0$, then we store ${\cal D}[t,f]:=\bot$. Now consider the case $y=0$. Let $f_1,\ldots,f_{\ell}$ be the states at $t'$ such that $f_i=f|_{B_{t'}}$ for all $i\in [\ell]$.  
    If there exists $i\in [\ell]$ such that 
    ${\cal D}[t',f_i]\neq \bot$, 
    then we arbitrarily choose one such  $i$ and store ${\cal D}[t,f]:= {\cal D}[t',f_i]$.
    Otherwise, we store ${\cal D}[t,f]:=\bot$. 
    
    \item \textbf{Case 3: $t$ is an introduce edge node.} Let $e=uw$ be the edge introduced in $t$ and $t'$ be the child of $t$. Let $f$ be a state at $t$. 
    Observe that $f$ is a state at $t'$ as well. 
    Let $f(u)=(x_1,y_1)$ and $f(w)=(x_2,y_2)$. 
    \begin{enumerate}
        \item If $x_1=x_2$, then  ${\cal D}[t,f]:=\bot$.  
        \item  If $x_1\neq x_2$ and ${\cal D}[t',f]\neq \bot$, then ${\cal D}[t,f]:= {\cal D}[t',f]$. 
        \item Let $f'$ be the function defined as follows. For all $z\in B_{t}\setminus \{u,w\}$, $f'(z)=f(z)$, $f'(u)=(x_1,y_1-1)$ and $f'(w)=(x_2,y_2-1)$. If $x_1\neq x_2$ and $y_1-1,y_2-1\geq 0$, and 
        ${\cal D}[t',f']=H_{f'}\neq \bot$, then ${\cal D}[t,f]:= H_{f'}+e$.
        \item If none of the above is applicable, then ${\cal D}[t,f]:=\bot$.
\end{enumerate}
    
    \item \textbf{Case 4: $t$ is a forget vertex node.} Let $v$ be the vertex forgotten at $t$ and let $t'$ be the child of $t$. Notice that $B_{t'}=B_t\cup \{v\}$. 
Let $f$ be an arbitrary state at $t$. Let $f_1,\ldots,f_{\ell}$ be the states at $t'$ such that $f=f_i|_{B_{t}}$ for all
 $i\in [\ell]$. 
 If there exists $i\in [\ell]$ such that 
    ${\cal D}[t',f_i]\neq \bot$, 
    then we arbitrarily choose one such  $i$ and store ${\cal D}[t,f]:= {\cal D}[t',f_i]$.
    Otherwise, we store ${\cal D}[t,f]:=\bot$. 
 
    \item \textbf{Case 5: $t$ is a join node.} Let $t$ be a join node with child nodes $t_{1}$ and $t_{2}$. We know that $B_t=B_{t_1}=B_{t_2}$. Moreover, we know that $E(G_t)=E(G_{t_1}) \cup E(G_{t_2})$ and $E(G_{t_1} \cap E(G_{t_2})=\emptyset$. Let $f$ be a state at $t$. A pair of states $(f_1,f_2)$ where $f_1$ and $f_2$ are states at $t_1$ and $t_2$, respectively, is {\em compatible} for $f$, if the following holds. For all $v\in B_t$, let $f(v)=(x_v,y_v)$, $f_1(v)=(x^{(1)}_v,y^{(1)}_v)$ and $f_1(v)=(x^{(2)}_v,y^{(2)}_v)$. Then, for all $v\in B_t$, $x_v=x^{(1)}_v=x^{(2)}_v$  and $y_{v}=y^{(1)}_v+y^{(2)}_v$. 
    If there are two pairs $(f_1,f_2)$ compatible with $f$ such that ${\cal D}[t_1,f_1]=H_1\neq \bot$ and ${\cal D}[t_1,f_1] = H_2 \neq \bot$, then we store ${\cal D}[t,f]:=H_1+H_2$    
    Here $H_2+H_1$ is the graph with vertex set $V(G_t)$ and edge set $E(H_1)\cup E(H_2)$. Otherwise we store 
    ${\cal D}[t,f]:=\bot$.
\end{itemize}

Finally we output ${\cal D}[r,f_{\emptyset}]$ as the output where $r$ is the root node. 
By induction, one can prove that above computation is correct. That is, for any node $t$ and any state $f$ at $t$, if ${\cal D}[t,f]=H_f\neq \bot$, then $H_f$ is a partial solution of state $f$ at node $t$. Otherwise, there is no partial solution of state $f$ at node $t$. For a node $t$, the number of functions from $B_t$ to $\{0,1,\ldots,\Delta\}\times \{0,1,\ldots,\Delta\}$ is $(\Delta+1)^{2|B_t|}$. This implies that number of states at a node $t$ is at most $(\Delta+1)^{2\cdot tw}$.
Therefore we get the following theorem.

\begin{theorem}
    Let $G$ be a given graph, with its nice tree decomposition of treewidth at most $tw$. Then a proper weight function for \vcew can be computed in time $O((\Delta+1)^{4(tw+1)}n)$, if it exists.
\end{theorem}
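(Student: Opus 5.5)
The plan is to prove correctness of the dynamic program by induction over the nice tree decomposition, and then to count states and per-node work for the running time. The invariant, proved bottom-up for every node $t$ and every state $f$ at $t$, is: ${\cal D}[t,f]\neq\bot$ if and only if some partial solution of state $f$ at $t$ exists (Definition~\ref{def:part}), and in that case ${\cal D}[t,f]$ is such a partial solution.

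The root bag is empty, so a partial solution of $f_\emptyset$ at $r$ is exactly a spanning subgraph $H$ of $G$ in which adjacent vertices have distinct degrees. That is, it is a solution, and the weight function $w(e)=1$ iff $e\in E(H)$ is proper. Lemma~\ref{lem:PSext} is what makes storing a single representative per state sufficient. It guarantees that whenever a state admits an extendable partial solution, the stored one is extendable too, so no choice made by the DP loses a solution.

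\textbf{Inductive step.} I would go through the node types as follows.
\begin{itemize}
\item \emph{Leaf.} Trivial.
\item \emph{Introduce vertex.} The new vertex $v$ is isolated in $G_t$, so its current degree must be $0$. Its final-degree coordinate imposes no constraint yet, because no edge of $G_t$ touches $v$.
\item \emph{Introduce edge $uw$.} Condition 2(iii) forces $x_1\neq x_2$. A partial solution at $t$ either omits $e$, in which case it is a partial solution of the same state at $t'$, or contains $e$, in which case deleting $e$ gives a partial solution of $f'$ at $t'$. Both directions are immediate.
\item \emph{Forget $v$.} Given a partial solution of some $f_i$ at $t'$, the current degree of $v$ is now final. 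For each neighbor $u\in B_t$ of $v$, condition 2(ii) at $t'$ relative to $u$ is inherited. For each forgotten neighbor $u$, condition 2(ii) at $t'$ says $d(u)\neq x_v$, and $x_v$ must then equal $d_{H_t}(v)$.
\end{itemize}

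The forget case hides a subtlety. The DP as written also needs $x_v=y_v$ at the moment $v$ is forgotten, since all edges at $v$ have been introduced. I would make this explicit by restricting the admissible $f_i$ to those with $f_i(v)=(x,x)$. With that restriction, 2(i) and 2(ii) follow at $t$, and the converse (restricting a partial solution at $t$) is direct.

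\emph{Join.} The edge sets of $G_{t_1}$ and $G_{t_2}$ are disjoint, and $V(G_{t_1})\cap V(G_{t_2})=B_t$. So current degrees add on the bag, and any edge with an endpoint outside $B_t$ lies entirely on one side. Conditions 2(i)--(ii) are therefore inherited from that side, using the common $x_v$. Condition 2(iii) is inherited directly. Conversely, splitting a partial solution at $t$ by the edge sets yields a compatible pair.

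\textbf{Running time.} A state assigns each of the at most $tw+1$ bag vertices a pair in $\{0,\dots,\Delta\}^2$, giving $(\Delta+1)^{2(tw+1)}$ states. Join nodes dominate the cost. For a fixed $f$, the number of compatible pairs is at most the number of ways to split each $y_v$, so enumerating all pairs $(f_1,f_2)$ bounds the work by $(\Delta+1)^{4(tw+1)}$ per node. Storing or copying the subgraphs can be avoided with back-pointers, after which there are $O(tw\cdot n)$ nodes. Absorbing the $tw$ factor (or charging it to the number of nodes) gives $O((\Delta+1)^{4(tw+1)}n)$.

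The main obstacle I expect is the forget-node case, specifically the requirement $x_v=y_v$ and verifying that 2(ii) becomes 2(i). The remaining cases are bookkeeping.
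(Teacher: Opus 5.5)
Your proposal is correct and follows the paper's own route: the same dynamic program over the nice tree decomposition with (final degree, current degree) states, one stored representative per state justified by Lemma~\ref{lem:PSext}, and join nodes dominating with the $(\Delta+1)^{4(tw+1)}$ count, except that you actually carry out the induction the paper only asserts. Your restriction to $f_i(v)=(x,x)$ at forget nodes is a genuine repair, not a clarification. The paper's Case~4 never compares the guessed final degree with the realized degree, so as written the DP would accept the no-instance $K_3$ with the empty subgraph (guessed final degrees $0,1,2$ on its vertices, all current degrees $0$).
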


The bottleneck in the running time is the computation at a join node $t$. For a pair of states $(f_1,f_2)$, we need to check whether it is compatible with state $f$ at $t$. So the running time is upper bounded by $O((\Delta+1)^{4(tw+1)}n)$. 

The above algorithm can be modified to get an algorithm for \prvcew with same running time. The change is as follows. In the introduce edge node, we remove step 2 if the introduced edge is pre-weighted with $1$ and we remove step 3 if the introduced edge is pre-weighted with $0$.


\section{Hardness by the size of a minimum feedback vertex set}

\label{sec:hardness}

To prove \vcew{} parameterized by the size of feedback vertex set is $W[1]$-hard, we give a parameterized  reduction 
from \lc{} parameterized by the vertex cover number.

\defproblem{\lc}{: Graph $G=(V,E)$ and for each vertex $v$, a list $L(v)$ of colors.} {Find a proper coloring $c$ of $G$ such that $c(v) \in L(V)$, for all $v \in V(G)$.}

It is known that \lc{} parameterized the vertex cover number is $W[1]$-hard \cite{fiala2011parameterized}.  
Recall that an instance of  \lc{} is $(G,{\cal L}=(L(v))_{v\in V(G)})$ where $G$ is an $n$-vertex graph and $L(v)$ is a set of colors for each vertex $v\in V(G)$. We also assume that the size of each list $L(v)$ is at most $n$. Because, if $|L(v)|> n$, then in any proper list coloring of $G$, at least $|L(v)|- (n-1)$ colors are not used to color the neighbors of $v$ in $G$ and so $G$ has a proper list coloring if and only if $G-v$ has a proper listing coloring with the same set of lists provided for each vertex $u \in V(G-v)$. 
Thus, the number of colors present in the union of all lists is at most $n^2$. So, without loss of generality, we assume that for each vertex $v\in V(G)$, $L(v)\subseteq \{2,\ldots,n^2+1\}$. That is, $1$ is not a color in the lists and the maximum color is $n^2+1$.  
Let $t=\max (\bigcup_{v\in V(G)} L(v))$.


Before explaining the reduction, we describe a few gadgets we use. The first gadget is a {\em suspended path} of length $2$ which is a graph on three vertices $v,x$ and $y$, and two edges $vx, xy$. 
In this paper, we always use a suspended path of length two, so unless specified, it is always of length two. In our reduction, we use a suspended path in such a way that $x$ and $y$ will not have any other neighbors, but $v$ may have other neighbors in the reduced graph.  
For a weight function $w$, if $w(vx)=0$, then ${\sf color}_w(x)=0+ w(xy) = w(xy)={\sf color}_w(y) $, and therefore the weight function $w$ is not proper. So, whenever there is such a path in a graph, any proper weight function forces $w(vx)$ to be $1$. If there are $k$ suspended paths at $v$, then the color of $v$ is at least $k$ for any proper weight function. See Figure~\ref{tw} for an illustration. 

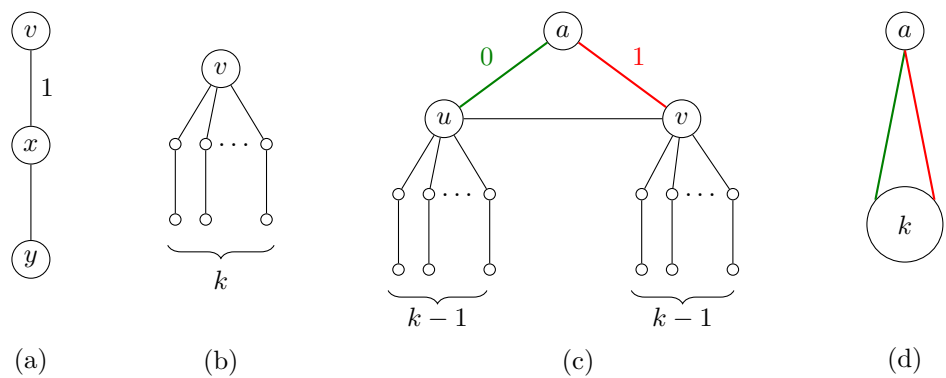
\begin{figure}[h]
    \centering

\usetikzlibrary{decorations.pathreplacing, positioning}

\begin{tikzpicture}[ 
    node distance=1.2cm and 0.8cm,
    mynode/.style={circle, draw, minimum size=0.5cm, inner sep=0pt},
    smallnode/.style={circle, draw, minimum size=0.15cm, inner sep=0pt}
]

\begin{scope}[local bounding box=a]
    \node[mynode] (u_a) {$v$};
    \node[mynode, below=1 cm of u_a] (x_a) {$x$};
    \node[mynode, below=1cm of x_a] (y_a) {$y$};
    
    \draw (u_a) -- node[right] {1} (x_a);
    \draw (x_a) -- (y_a);
    \node[below=0.8cm of y_a] {(a)};
\end{scope}

\begin{scope}[shift={(2.5,-0.5)}, local bounding box=new_b]
    \node[mynode] (v_new) {$v$};
    
    \foreach \x/\i in {-0.6/1, -0.2/2, 0.6/3} {
        \node[smallnode] (mid\i) at ([shift={(\x,-1)}]v_new.center) {};
        \node[smallnode] (leaf\i) at ([shift={(0,-1)}]mid\i.center) {};
        \draw (v_new) -- (mid\i);
        \draw (mid\i) -- (leaf\i);
    }
    \node at ([shift={(0.2,-1)}]v_new.center) {$\dots$};
    
    \draw [decorate, decoration={brace, amplitude=5pt, mirror, raise=4pt}]
        (-0.7,-2.2) -- (0.7,-2.2) node [black, midway, yshift=-0.6cm] {$k$};
        
    \node at (0,-3.9) {(b)};
\end{scope}

\begin{scope}[shift={(7,0)}, local bounding box=c]
    \node[mynode] (a_b) {$a$};
    \node[mynode, below left=0.8cm and 1.2cm of a_b] (u_b) {$u$};
    \node[mynode, below right=0.8cm and 1.2cm of a_b] (v_b) {$v$};
    
    \draw[black!50!green, thick] (a_b) -- node[above left] {0} (u_b);
    \draw[red, thick] (a_b) -- node[above right] {1} (v_b);
    \draw (u_b) -- (v_b);
    
    \foreach \x/\i in {-0.6/1, -0.2/2, 0.6/3} {
        \node[smallnode] (uc\i) at ([shift={(\x,-1)}]u_b.center) {};
        \node[smallnode] (ucc\i) at ([shift={(0,-1)}]uc\i.center) {};
        \draw (u_b) -- (uc\i);
        \draw (uc\i) -- (ucc\i);
    }
    \node at ([shift={(0.2,-1)}]u_b.center) {$\dots$};
    
    \draw [decorate, decoration={brace, amplitude=5pt, mirror, raise=4pt}]
        (-2.3,-3.3) -- (-1,-3.3) node [black, midway, yshift=-0.5cm] {$k-1$};

    \foreach \x/\i in {-0.6/1, -0.2/2, 0.6/3} {
        \node[smallnode] (vc\i) at ([shift={(\x+3.2,-1)}]u_b.center) {};
        \node[smallnode] (vcc\i) at ([shift={(0,-1)}]vc\i.center) {};
        \draw (v_b) -- (vc\i);
        \draw (vc\i) -- (vcc\i);
    }
    \node at ([shift={(3.4,-1)}]u_b.center) {$\dots$};
    
    \draw [decorate, decoration={brace, amplitude=5pt, mirror, raise=4pt}]
        (0.9,-3.3) -- (2.25,-3.3) node [black, midway, yshift=-0.5cm] {$k-1$};
        
    \node at (0.2,-4.4) {(c)};
\end{scope}

\begin{scope}[shift={(11.5,0)}, local bounding box=d]
    \node[mynode] (a_c) {$a$};
    \node[mynode, minimum size=1cm, below=1.8cm of a_c] (k_c) {$k$};
    
    \draw[black!50!green,thick] (a_c.south) -- (k_c.140);
    \draw[red,thick] (a_c.south) -- (k_c.40);
    
    \node[below=1.0cm of k_c] {(d)};
\end{scope}

\end{tikzpicture}

\caption{(a) Suspended path, (b) $k$ suspended paths at $v$, (c) Type-A $k$-disallowing gadget, (d) Its representation}
 \label{tw}
\end{figure}
%

Next, we define our second gadget, which is 
{\em $k$-disallowing gadget of type-A},  for any integer $k\geq 2$. 
Consider a $C_3$ of vertices $a, u,v$ as shown in Figure~\ref{tw}(c). Both $u$ and $v$ have $k-1$ suspended paths. In our reduction, we use this gadget in such a way that the vertex $a$ will have neighbors in addition to $u,v$, and other vertices in the gadget will not have any neighbor other than the neighbors mentioned in the gadget. If a weight function $w$ assigns $w(au)=w(av)$, then ${\sf color}_{w}(u)=w(au)+w(uv)+(k-1)=w(av)+w(uv)+(k-1)= {\sf color}_{w}(v)$, which is a coloring conflict. Hence, $w$ is not a proper weight function. So, any proper weight function $w$ forces $w(au) \neq w(av)$. In a proper weight function $w$, we may assume that $w(au)=0$ and $w(av)=1$. An edge of $C_3$ that is forced to have weight zero or one is depicted with a green or red edge, respectively, in Figure \ref{tw} (c). If $w(uv)=0$, then ${\sf color}_w(u)=k-1$, and ${\sf color}_w(v)=k$. If $w(uv)=1$, then ${\sf color}_w(u)=k$, and ${\sf color}_w(v)=k+1$. In any case, one of the neighbors of $a$ gets a color $k$. Hence, any proper weight function should not induce color $k$ on the vertex $a$. While this gadget disallows color $k$ for $a$, it also forces it to have an edge of weight $1$ incident on it. This type-A gadget is represented in short as shown in Figure~\ref{tw}(d). We use this representation in the construction of our reduction.

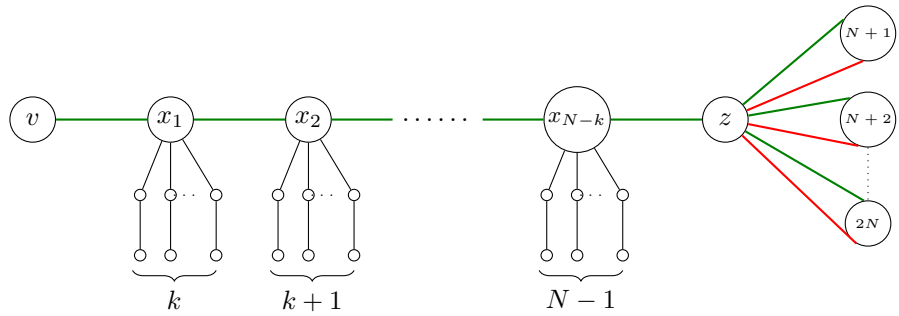
\begin{figure}[h]
    \centering
 
\begin{tikzpicture}[
    node distance=1.5cm and 1.2cm,
    mynode/.style={circle, draw, minimum size=0.6cm, inner sep=1pt},
    ellnode/.style={ellipse, draw, minimum width=1.2cm, minimum height=0.7cm, inner sep=1pt},
    smallnode/.style={circle, draw, fill=white, minimum size=0.15cm, inner sep=0pt},
    smallnode2/.style={circle, draw, fill=black, minimum size=0.15cm, inner sep=0pt}
]

\node[mynode] (v) {$v$};

\node[mynode, right=of v] (x1) {$x_1$};
\node[mynode, right=of x1] (x2) {$x_2$};

\draw[black!50!green, thick] (v) -- (x1) -- (x2);

\foreach \n/\label in {x1/k, x2/k+1} {
    \foreach \x/\i in {-0.4/1, 0/2, 0.6/3} {
        \node[smallnode] (c\n\i) at ([shift={(\x,-1)}]\n.center) {};
        \node[smallnode] (cc\n\i) at ([shift={(0,-0.8)}]c\n\i.center) {};
        \draw (\n) -- (c\n\i);
        \draw (c\n\i) -- (cc\n\i);
    }
    \node at ([shift={(0.2,-1)}]\n.center) {\tiny $\dots$};
    \draw [decorate, decoration={brace, amplitude=4pt, mirror}]
        ([shift={(-0.5,-2)}]\n.center) -- ([shift={(0.6,-2)}]\n.center) 
        node [midway, below=4pt] {$\label$};
}

\node[right=0.8cm of x2] (dots) {$\dots \dots$};
\node[mynode, right=0.8cm of dots] (xnk) {\footnotesize $ x_{N-k}$};
\draw[black!50!green, thick] (x2) -- (dots) -- (xnk);

\foreach \x/\i in {-0.4/1, 0/2, 0.6/3} {
    \node[smallnode] (cnk\i) at ([shift={(\x,-1)}]xnk.center) {};
    \node[smallnode] (ccnk\i) at ([shift={(0,-0.8)}]cnk\i.center) {};
    \draw (xnk) -- (cnk\i);
    \draw (cnk\i) -- (ccnk\i);
}
\node at ([shift={(0.2,-1)}]xnk.center) {\tiny $\dots$};
\draw [decorate, decoration={brace, amplitude=4pt, mirror}]
    ([shift={(-0.5,-2)}]xnk.center) -- ([shift={(0.6,-2)}]xnk.center) 
    node [midway, below=4pt] {$N-1$};

\node[mynode, right=of xnk, minimum size=0.6cm] (z) {$z$};
\draw[black!50!green, thick] (xnk) -- (z);

\node[mynode, right=1.2cm of z] (n2) {\tiny $N+2$};
\node[mynode, above=0.4cm of n2] (n1) {\tiny $N+1$};
\node[mynode, below=0.7cm of n2] (nn) {\tiny $2N$};

\draw[black!50!green, thick] (z) -- (n1.150);
\draw[red, thick] (z) -- (n1.260);
\draw[black!50!green, thick] (z) -- (n2.130);
\draw[red, thick] (z) -- (n2.250);
\draw[black!50!green, thick] (z) -- (nn.100);
\draw[red, thick] (z) -- (nn.240);

\draw[dotted] (n2) -- (nn);

\end{tikzpicture}
    \caption{$k$-disallowing gadget of type-B at $v$.}
    \label{fvs1}
\end{figure}


Notice that there is a triangle in the $k$-disallowing gadget of type-A. Since we want the feedback vertex set size to be bounded by a function of the parameter, in our reduced graph, we can not use {\em many} disjoint copies of $k$-disallowing gadgets of type-$A$.   
 So, with the help of the type-A gadget, we design a new gadget, called the type-B $k$-disallowing gadget, $k \geq 2$.

Here, we use a universal vertex $z$ and add a large number of type-A disallowing gadgets only at $z$. The vertex $z$, and its type-A gadgets are shared by all the type-B gadgets in the construction. 
Figure \ref{fvs1} shows such a gadget at vertex $v$. Here, $N$ is a large number greater than $k$ which we will explain later.  
We'll shortly prove that this gadget disallows color $k$ by any proper weight function of $G$.
If vertex $v$ demands another disallowing color, say $p$, then we add a path starting from $v, y_1, y_2, \dots , y_{N-p}, z$. There will be $p$ suspended paths at $y_1$, $p+1$ suspended paths at $y_2$, and so on. Vertex $y_{N-p}$ will have $N-1$ suspended paths. And it will be adjacent to the universal vertex $z$. Figure \ref{fvs2} shows $k,p$ disallowing gadgets at $v$. Both these gadgets share the vertex $z$, and type-A gadgets incident on $z$. The shared vertices are shown by a blue rectangle in Figure~\ref{fvs2}.
Similarly, each disallowed color will create a path to reach the common vertex $z$. The same can be replicated for other vertices as well. Each gadget starts with different chain vertices and merges at $z$. 

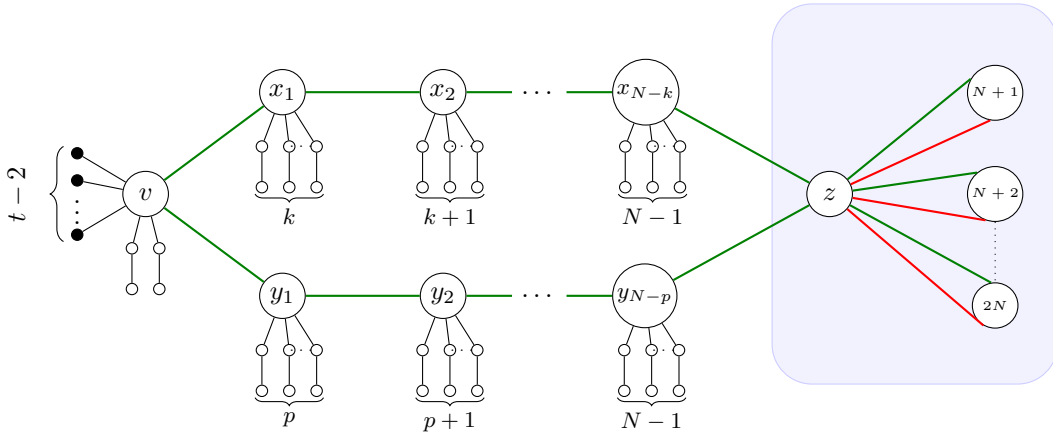
\begin{figure}[h]
    \centering
    
\begin{tikzpicture}[scale=0.9,
    node distance=1.2cm and 1.5cm,
    mynode/.style={circle, draw, minimum size=0.6cm, inner sep=1pt, fill=white},
    smallnode/.style={circle, draw, fill=white, minimum size=0.15cm, inner sep=0pt},
    smallnode2/.style={circle, draw, fill=black, minimum size=0.15cm, inner sep=0pt}
]

\node[mynode, minimum size=0.6cm] (z) at (10, 0) {$z$};

\node[mynode] (v) at (0,0) {$v$};

\foreach \y/\i in {0.6/1, 0.2/2, -0.6/3} {
    \node[smallnode2] (vtl\i) at ([shift={(-1,\y)}]v) {};
    \draw (v) -- (vtl\i);
}
\node at ([shift={(-1,-0.15)}]v) {$\vdots$};
\draw [decorate, decoration={brace, amplitude=5pt}]
    ([xshift=-1.2cm, yshift=-0.7cm]v.center) -- ([xshift=-1.2cm, yshift=0.7cm]v.center) 
    node [midway, xshift=-0.6cm, rotate=90] { $t-2$};

\foreach \x/\i in {-0.2/1, 0.2/2} {
    \node[smallnode] (vd\i a) at ([shift={(\x,-0.8)}]v.center) {};
    \node[smallnode] (vd\i b) at ([shift={(0,-0.6)}]vd\i a.center) {};
    \draw (v) -- (vd\i a) -- (vd\i b);
}

\begin{scope}[yshift=1.5cm]
    \node[mynode] (x1) at (2,0) {$x_1$};
    \node[mynode, right=of x1] (x2) {$x_2$};
    \node[right=0.6cm of x2] (xdots) {$\dots$};
    \node[mynode, right=0.6cm of xdots] (xnk) {\footnotesize $x_{N-k}$};
    
    \draw[black!50!green, thick] (v) -- (x1);
    \draw[black!50!green, thick] (x1) -- (x2) -- (xdots) -- (xnk);
    \draw[black!50!green, thick] (xnk) -- (z);

    \foreach \n/\label in {x1/k, x2/k+1, xnk/N-1} {
        \foreach \x/\i in {-0.3/1, 0.1/2, 0.5/3} {
            \node[smallnode] (c\n\i) at ([shift={(\x,-0.8)}]\n.center) {};
            \node[smallnode] (cc\n\i) at ([shift={(0,-0.6)}]c\n\i.center) {};
            \draw (\n) -- (c\n\i) -- (cc\n\i);
        }
        \node at ([shift={(0.3,-0.8)}]\n.center) {\tiny $\dots$};
        \draw [decorate, decoration={brace, amplitude=3pt, mirror}]
            ([shift={(-0.4,-1.5)}]\n.center) -- ([shift={(0.6,-1.5)}]\n.center) 
            node [midway, below=2pt] {\footnotesize $\label$};
    }
\end{scope}

\begin{scope}[yshift=-1.5cm]
    \node[mynode] (y1) at (2,0) {$y_1$};
    \node[mynode, right=of y1] (y2) {$y_2$};
    \node[right=0.6cm of y2] (ydots) {$\dots$};
    \node[mynode, right=0.6cm of ydots] (ynk) {\footnotesize $y_{N-p}$};
    
    \draw[black!50!green, thick] (v) -- (y1);
    \draw[black!50!green, thick] (y1) -- (y2) -- (ydots) -- (ynk);
    \draw[black!50!green, thick] (ynk) -- (z);

    \foreach \n/\label in {y1/p, y2/p+1, ynk/N-1} {
        \foreach \x/\i in {-0.3/1, 0.1/2, 0.5/3} {
            \node[smallnode] (cy\n\i) at ([shift={(\x,-0.8)}]\n.center) {};
            \node[smallnode] (ccy\n\i) at ([shift={(0,-0.6)}]cy\n\i.center) {};
            \draw (\n) -- (cy\n\i) -- (ccy\n\i);
        }
        \node at ([shift={(0.3,-0.8)}]\n.center) {\tiny $\dots$};
        \draw [decorate, decoration={brace, amplitude=3pt, mirror}]
            ([shift={(-0.4,-1.5)}]\n.center) -- ([shift={(0.6,-1.5)}]\n.center) 
            node [midway, below=2pt] {\footnotesize $\label$};
    }
\end{scope}

\node[mynode, right=1.5cm of z] (n2) {\tiny $N+2$};
\node[mynode, above=0.6cm of n2] (n1) {\tiny $N+1$};
\node[mynode, below=0.8cm of n2] (nn) {\tiny $2N$};

\draw[black!50!green, thick] (z) -- (n1.150); \draw[red, thick] (z) -- (n1.260);
\draw[black!50!green, thick] (z) -- (n2.130); \draw[red, thick] (z) -- (n2.250);
\draw[black!50!green, thick] (z) -- (nn.100); \draw[red, thick] (z) -- (nn.240);
\draw[dotted] (n2) -- (nn);

\begin{scope}[on background layer]
    \draw[fill=blue!5, draw=blue!20, rounded corners=15pt] 
        ($(z.west) + (-0.5, 2.8)$) rectangle ($(n2.east) + (0.5, -2.8)$);
\end{scope}

\end{tikzpicture}

    \caption{Illustration of two disallowing gadgets of type-B at $v$.}
    \label{fvs2}
\end{figure}

We use \lc for the parameterized reduction. Every vertex of $G$ has a list of admissible colors given. The idea is to disallow all the colors except $L(v)$ for each $v \in V(G)$. 
Suppose we want to disallow color $k$ on a vertex $ v$. Then $v$ should have a neighbor, say $x$, such that $x$ has color $k$ under any proper function.
We can force $x$ to have a particular color using suspended paths. But the edge $vx$ cannot be forced to have a particular weight unless it is part of a cycle, just as in the type-A disallowing gadget. And based on this edge weight, the color of $x$ can not be guaranteed. Our main idea for the type-B gadget is to force the edge $vx$ to have weight $0$. We achieve this using a cascading effect that starts at the universal vertex $z$. 
We force $z$ to have a very large color $N$ using type-A gadgets. So that any edge incident to $z$, apart from type-$A$ gadgets edges, must have weight $0$, or else it will cause a coloring conflict. 
Let us prove this formally. 

\begin{lemma}
\label{lem:red:cascade}
Let $D$ be the graph shown in Figure~\ref{fvs2} and $D$ is an induced subgraph of a graph $H$ such that for any edge $e$ between $V(D)$ and $V(H)\setminus V(D)$, the endpoint of $e$ in $V(D)$ is from $\{v,z\}$.  
Let $N,k,p,t\in {\mathbb N}$ such that  $t\geq 2, N>k,p$ and $d_H(z)\leq 3N$. 
Let $w$ be a proper weight function of $H$. 
Then, $w(vx_1)=w(vy_1)=0$, ${\sf color}_w(x_1)=k$ and ${\sf color}_w(y_1)=p$. 
\end{lemma}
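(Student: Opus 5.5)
The plan is to start at the universal vertex $z$, pin down its color exactly, and then let the consequences run backward along the two chains toward $v$.

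\emph{Step 1: $z$ has color $N$ and weight $0$ on all its chain edges.} In $D$, $z$ carries type-A disallowing gadgets for the colors $N+1,\dots,2N$. That is $N$ gadgets, and each contributes two edges $za, zb$ with $w(za)\neq w(zb)$, as established for type-A gadgets. Hence the gadget edges contribute exactly $N$ to ${\sf color}_w(z)$. The gadgets also forbid $z$ from having any of the colors $N+1,\dots,2N$. Since $d_H(z)\le 3N$ and $2N$ of these edges are gadget edges, at most $N$ further edges are incident to $z$. So ${\sf color}_w(z)\le 3N$.

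To pin down ${\sf color}_w(z)$ exactly, I need colors in $(2N,3N]$ to be impossible. This is the delicate point. The gadget edges give exactly $N$, so ${\sf color}_w(z)=N+s$, where $s$ is the number of weight-$1$ non-gadget edges at $z$ and $s\le d_H(z)-2N\le N$. Therefore ${\sf color}_w(z)\in[N,2N]$, and combined with the forbidden colors this forces ${\sf color}_w(z)=N$ and $s=0$. In particular $w(x_{N-k}z)=w(y_{N-p}z)=0$.

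\emph{Step 2: cascade along the $x$-chain.} I will show by backward induction on $i$ that $w(x_{i-1}x_i)=0$ for all $i$, where $x_0:=v$ and $x_{N-k+1}:=z$. The vertex $x_i$ carries $k+i-1$ suspended paths, and each forces weight $1$ on its edge, as shown for suspended paths.

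\emph{Base case} $i=N-k$. Here $x_{N-k}$ has $N-1$ suspended paths and $w(x_{N-k}z)=0$. If $w(x_{N-k-1}x_{N-k})=1$, then ${\sf color}_w(x_{N-k})=N={\sf color}_w(z)$, which is a conflict. So this edge has weight $0$, and ${\sf color}_w(x_{N-k})=N-1$.

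\emph{Inductive step.} Suppose $w(x_ix_{i+1})=0$ and ${\sf color}_w(x_{i+1})=k+i$. Then ${\sf color}_w(x_i)=(k+i-1)+w(x_{i-1}x_i)$. If that weight were $1$, then $x_i$ would have color $k+i$, conflicting with $x_{i+1}$. Hence $w(x_{i-1}x_i)=0$ and ${\sf color}_w(x_i)=k+i-1$.

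At $i=1$ this gives $w(vx_1)=0$ and ${\sf color}_w(x_1)=k$. Note that the path vertices $x_i$ have no neighbors outside $D$ (by hypothesis only $v$ and $z$ do), so their colors are fully determined by the edges listed.

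\emph{Step 3: the $y$-chain.} The same argument with $p$ in place of $k$ gives $w(vy_1)=0$ and ${\sf color}_w(y_1)=p$. The hypotheses $N>k,p$ guarantee that the chains have length at least $1$. The condition $t\ge 2$ plays no role here beyond well-definedness.

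The main obstacle is Step 1: ruling out that $z$ reaches a color above $2N$. This is exactly where the bound $d_H(z)\le 3N$ is used.
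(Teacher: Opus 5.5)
Your proposal is correct and follows essentially the same route as the paper. The $N$ type-A gadgets contribute exactly $N$ to ${\sf color}_w(z)$, and $d_H(z)\le 3N$ confines ${\sf color}_w(z)$ to $[N,2N]$; since the gadgets forbid $[N+1,2N]$, every non-gadget edge at $z$ gets weight $0$, and these forced zeros then propagate backward along each chain, exactly as in the paper. Your explicit backward induction (with $x_{N-k+1}:=z$ and $x_i$ carrying $k+i-1$ suspended paths) is a cleaner formalization of the paper's ``in the same fashion'' cascade step.
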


\begin{proof}
First, we prove that any edge incident on $z$ which is not from any type-$A$ disallowing gadget must have weight $0$. This will eventually help us show $w(vx_1)=0$, and hence ${\sf color}_w(x_1)=k$. Similar arguments show that $w(vy_1)=0$ and ${\sf color}_w(y_1)=p$, so we don't explicitly mention their proofs.

There are $N$ type-$A$ gadgets incident on $z$, as shown in Figure~\ref{fvs2}, with exactly two edges incident on it from each type-$A$ gadget, one with weight $0$, and other with weight $1$. So, color of $z$ induced by any proper weight function is at least $N$. 
Given that $d_H(z) \leq 3N$, there are at most $N$ edges incident on it, apart from type-$A$ gadget edges. 
We claim that any such edge must have weight $0$. If not, then there are some edges incident on $z$ with weight $1$. Hence, the color induced on $z$ will be in the range $[N+1,2N]$, as there are at most $N$ such edges incident on $z$, other than the edges in the type-A gadgets.
Vertex $z$ is disallowed colors $[N+1,2N]$ by the type-$A$ gadgets.
So, the only feasible color for $z$ under any proper weight function is $N$.
This means every edge incident on $z$, not part of type-$A$ gadget, is forced to have weight $0$ under any proper weight function. Hence, $w(zx_{N-k})=0$ and $w(zy_{N-p})=0$. 

Consider vertex $x_{N-k}$. It has $N-1$ suspended paths incident on it, causing its color at least $N-1$. We have already proved that $w(zx_{N-k})=0$. Hence color induced on $x_{N-k}$ under any proper weight function is $N-1+0+w(x_{N-k}x_{N-k-1})$. If  $w(x_{N-k}x_{N-k-1})=1$, then ${\sf color}_w(x_{N-k})=N$, a conflict with ${\sf color}_w(z)$. 
To avoid such a conflict, any proper weight function must assign $w(x_{N-k}x_{N-k-1})=0$, and hence ${\sf color}_w(x_{N-k})=N-1$.
Similarly, $w(x_{N-k-1}x_{N-k-2})$ must be $0$. If not, then ${\sf color}_w(x_{N-k-1})=(N-2)+1$ (here $N-2$ comes from suspended paths at $x_{N-k-1}$), a conflict between $x_{N-k}$ and $x_{N-k-1}$.
In the same fashion, we claim that $w(x_ix_{i-1})=0$, for $i$ in $\{N-k-2, N-k-3, \dots ,2\}$. So, $w(x_2x_1)=0$, and ${\sf color}(x_2) =k+1$. There are $k$ suspended path incident on $x_1$. If $w(vx_1)=1$, then ${\sf color}_w(x_1) = k+1 = {\sf color}_w(x_2)$, a conflict. Hence, $w(vx_1)=0$ and ${\sf color}_w(x_1)=k$.
\end{proof}

\noindent
{\bf Construction.}
We have explained all the gadgets needed for the reduction. Now we will explain the reduction. 
Let $(G, \mathcal{L})$ be an instance of \lc, and $n=|V(G)|$. Recall that 
for each vertex $v\in V(G)$, $L(v)\subseteq \{2,\ldots,n^2+1\}$. That is, $1$ is not a color in the lists and the maximum color is $n^2+1$.  
Let $t=\max (\bigcup_{v\in V(G)} L(v))$. Let $N=n^3+n^2-n$. 
We construct a graph $H$ as follows. Initially, we set $H:=G$. Now we add gadgets to it. 

\begin{enumerate}
    \item For each vertex $v \in V(G)$, add $k$-disallowing gadgets of type-B for all $k\in [t+n-1] \setminus (L(v) \cup \{1\})$. As mentioned earlier, the type-A gadget (i.e., the vertices in blue rectangle in Figure~\ref{fvs2}) is common for all the type-B gadgets. 
    \item Add $t-2$ pendant vertices to each vertex of $G$. Note that $t \geq 2$.
    \item Add two suspended paths of length $2$ at each vertex $v \in V(G)$.
\end{enumerate}

This completes the construction of the reduced graph $H$. 
Figure \ref{fvs2} shows the construction at $v$ with disallowed colors $k$ and $p$.
%
Before we prove the correctness of the reduction, let us discuss the selection of the value $N=n^3+n^2-n$. Recall that $z$ is a vertex in $H$ which is common in all the $k$-disallowing gadgets of type-B used in the reduction. There are $n$ vertices in $G$ and we want to disallow at most $t+n-2$ colors at vertex in $G$ (because $|L(v)| \geq 1, \forall v \in V(G))$. So, the number of edges incident to $z$ that are not part of the {\em common} type-A gardget part, is at most $n(t+n-2)\leq n^3+n^2-n$.  So, we choose $N=n^3+n^2-n$ and this implies that $d_H(z)\leq 3N$, a condition needed for Lemma~\ref{lem:red:cascade}. 

\begin{lemma}
\label{lem:correctnessfvs}
$(G, \mathcal{L})$ is yes-instance of \lc if and only if $H$ is yes-instance of \vcew.
\end{lemma}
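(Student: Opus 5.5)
The plan is to prove both directions of Lemma~\ref{lem:correctnessfvs} by first pinning down exactly which colors an original vertex $v\in V(G)$ can receive in $H$. The key structural fact comes from Lemma~\ref{lem:red:cascade}. I will apply it to each pair of type-B gadgets at $v$; its hypotheses hold because $d_H(z)\le 3N$ by the choice of $N$. It gives that in every proper weighting $w$ of $H$, every edge from $v$ to a first chain vertex $x_1$ has weight $0$, and that this $x_1$ has color exactly $k$, where $k$ is the disallowed color of that gadget. In addition, the two suspended paths at $v$ force two incident edges of weight $1$. Hence
\[
{\sf color}_w(v)=2+(\text{number of weight-}1\text{ pendant edges at }v)+(\text{number of weight-}1\ G\text{-edges at }v)\in[2,\,t+n-1].
\]
Every $k\in[t+n-1]\setminus(L(v)\cup\{1\})$ is the color of some neighbor $x_1$ of $v$, so ${\sf color}_w(v)\in L(v)$.

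\textbf{Forward direction ($H$ yes $\Rightarrow$ $G$ yes).} Given a proper $w$ for $H$, I define $c(v)={\sf color}_w(v)$ for $v\in V(G)$. By the computation above, $c(v)\in L(v)$. Since $w$ is proper on the edges of $G\subseteq H$, adjacent vertices of $G$ get distinct values of $c$. So $c$ is a proper list coloring.

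\textbf{Backward direction ($G$ yes $\Rightarrow$ $H$ yes).} Given a list coloring $c$, I will build $w$ explicitly:
\begin{itemize}
\item all edges of $G$ get weight $0$;
\item at each $v$, exactly $c(v)-2$ of its $t-2$ pendant edges get weight $1$; this is possible because $2\le c(v)\le t$. Then ${\sf color}_w(v)=c(v)$;
\item in each suspended path $v\,x\,y$ (at any vertex), the edge $vx$ gets weight $1$ and $xy$ gets weight $0$, so $x$ has color $1$ and $y$ has color $0$;
\item all chain edges of type-B gadgets, including the edges at $v$ and at $z$, get weight $0$, so the $i$-th chain vertex of a $k$-gadget has color $k+i-1$;
\item in each type-A gadget at $z$ disallowing $k'\in[N+1,2N]$, set $w(zu)=0$, $w(zv')=1$ and $w(uv')=1$. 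Then $u$ has color $k'$ and $v'$ has color $k'+1$, both greater than $N$, while $z$ has color exactly $N$.
\end{itemize}

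It remains to check every edge type for conflicts:
\begin{itemize}
\item $G$-edges: distinct by the properness of $c$.
\item $v$ versus its pendants and suspended-path middles: these have color at most $1$, and $c(v)\ge2$.
\item $v$ versus $x_1$: the color of $x_1$ is a disallowed $k\ne c(v)$.
\item Consecutive chain vertices: they differ by $1$. The last one has color $N-1\ne N$.
\item Chain vertices versus their suspended-path middles: chain colors are at least $2$, the middles have color $1$.
\item $z$ versus the type-A vertices: those have colors above $N$.
\item $u$ versus $v'$: they differ by $1$.
\item Suspended paths inside the gadgets: colors $1/0$, versus $u,v'$ with colors at least $N$.
\end{itemize}

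The main obstacle I expect is choosing the free weights in the backward direction so that no accidental equality arises. The delicate spots are the type-A gadget with $k'=N+1$, where the choice $w(uv')=0$ would give $u$ color $N$ and clash with $z$, and the suspended paths at $v$ when $c(v)=2$, where the choice $w(xy)=1$ would clash. I resolve these by the specific choices above ($w(uv')=1$ and $w(xy)=0$). I will also double-check that $c(v)\le t$ leaves enough pendants, and that the gadgets cover every color in $[2,t+n-1]$ outside $L(v)$.
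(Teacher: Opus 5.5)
Your proposal is correct and matches the paper's proof. For $H\Rightarrow G$ you use Lemma~\ref{lem:red:cascade} and the two suspended paths to confine ${\sf color}_w(v)$ to $[2,t+n-1]$, then the type-B gadgets to force it into $L(v)$; for $G\Rightarrow H$ you use the same explicit weighting as the paper (all $G$-edges and chain edges $0$, $c(v)-2$ pendant edges $1$, suspended paths $1/0$, weights $0,1,1$ on each type-A triangle) with the same edge-by-edge check, and you additionally make explicit the two pitfalls ($w(uv')=0$ clashing with $z$ when $k'=N+1$, and $w(xy)=1$ clashing when $c(v)=2$) that the paper leaves implicit.
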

    \begin{proof}
  Let $(G, \mathcal{L})$ be a yes-instance of \lc, and the corresponding proper coloring be $c~:~V(G)\rightarrow [t]$. We define a weight function $w$ on $E(H)$ based on $c$ as follows. Assign $w(e) = 0$, for all $e \in E(G)$. 
  Now we need to assign weights to edges that are exclusively in $H$.
  In the construction, we added $t-2$ pendant vertices at every vertex of $G$. 
  For a vertex $v \in V(G)$, if $c(v) = p \in L(v)$, assign weight one to $p-2$ of the $t-2$ pendant edges and zero to the remaining $t-p$ pendant edges. Note that $2 \leq p \leq t$.
  There are two suspended paths at each vertex of $G$. Assign weight zero to the pendant edges and weight one to the remaining edges. In fact, we do the same for all the suspended paths in $H$. Assign weight zero to the pendant edge of it, and weight one to the remaining edge of it.
  Every type-B gadget contains a path that joins the vertex of $G$ to $z$. 
  As observed earlier, all path edges in the type-B gadgets must have weight $0$ (green edges in Figure~\ref{fvs2}). So we assign weight $0$ for all the path edges.
  Every vertex in the path has suspended paths incident on it, and we mentioned their weight values. 
  The type-A disallowing gadgets have triangles, and there are two edges from each triangle incident on $z$. They are already forced to have weights $1$ and $0$, respectively. And we assign them those weights. These edges are shown in color red and green in Figure~\ref{fvs2}, respectively. 
  We assign a weight of one to the remaining third edge in each triangle. The suspended paths in this gadget also follow the weight values we mentioned before. With this, our description of the weight function $w$ is complete; we have defined it for all the edges in $H$.
  \begin{claim}
      The weight function $w$ is proper, and ${\sf color}_w(v)=c(v)$, for all $v \in V(G)$. 
  \end{claim}

  \begin{claimproof}
      For a vertex $v \in V(G)$, all the edges incident on it except the $c(v)-2$ pendant edges are assigned weight zero. But there are two suspended paths at $v$ contributing $2$ to its color induced by $w$. So, ${\sf color}_w(v)=c(v)-2+2 =c(v)$. Given that $c$ is a proper coloring for $G$, any two adjacent vertices in $G$ get distinct colors. As color induced by $w$ is same as $c(v)$, $w$ is proper on the vertices of $G$.
      Now it remains to show that $w$ is proper on rest of the vertices as well. The $t-2$ pendant vertices at $v$ get color either $1$ or $0$. And these vertices do not cause coloring conflict as $0,1 \notin L(v)$, for all $v \in V(G)$.
      There are two suspended paths at $v$, and color of the middle vertex is $1$, while the pendant vertex gets color $0$. So, such vertices do not create any conflict.

      We consider one type-B disallowing gadget and prove that $w$ doesn't cause a coloring conflict with any vertex in this gadget. The similar arguments hold true for each such gadget, and the proof follows. 
      Consider the gadget where $v$ is disallowed a color $k$. For the  first vertex in the path (i.e., the vertex adjacent to $v$), ${\sf color}_w(x_1)=k$, and $k \notin L(v)$, while $c(v)={\sf color}_w(v) \in L(v)$. So $w$ doesn't cause coloring conflict of $v$ and $x_1$. 
      The vertex $x_1$ has $k$ suspended paths incident on it. Let $a$ and $b$ be vertices of one such suspended path. That is $ax_1, ab \in E(H)$. As described above, ${\sf color}_w(a) =1$, and ${\sf color}_w(b)=0$. Note that $k \geq 2$. So, $w$ is proper on this suspended path. And so it does hold for every suspended path incident on $x_1, x_2, \dots , x_{N-k}$.  
      The path vertices $x_1, x_2, \dots , x_{N-k}$ get colors $k, k+1, \dots N-1$, respectively, without  conflicting with each other.


      Now consider the type-A gadgets at $z$. We have $k$-disallowing gadgets of type-A at $z$ for all $k\in \{N+1,\ldots,2N\}$. Consider the 
      ($N+i$)-disallowing gadget at $z$ for $i\in [N]$. 
      Let the triangle in this gadget be  $z,s_i,t_i$ where $w(zs_i)=0$,  $w(zt_i)=1$ and $w(s_it_i)=1$. So, ${\sf color}_w(s_i)=N+i$, and ${\sf color}_w(t_i)=N+i+1$. 
      For the suspended paths attached to vertices $s_i$ and $t_i$, the pendant vertices gets color $0$, and the middle vertices gets color $1$, and $N+i, N+i+1 >1$, so it doesn't cause coloring conflict.  Notice that there are $N$ type-A gadgets at $z$ and exactly $N$ edges incident of $z$ are set to one by $w$. So, we have 
      ${\sf color}_w(z)=N$ and therefore, no coloring conflict between $z$ and any of its neighbors.       So we have proved that for any two adjacent vertices in $H$, the colors induced by $w$ are distinct. Hence $w$ is a proper weight function.
  \end{claimproof}

  Hence, if  $(G,\mathcal{L})$ is a yes-instance of \lc, then there is a proper weight function $w$ on $H$,  making $H$ a yes-instance of \vcew.

  Conversely, we prove that if $H$ is a yes-instance of \vcew, then $(G,\mathcal{L})$ is a yes-instance of \lc. 
  Assume that there is a proper weight function $w$ on $E(H)$. 
  Consider a vertex $v \in V(G) \cap V(H)$. Its degree $d_H(v) = d_G(v)+t+(t+n-1)-|L(v)|-1$. Because $v$ is incident with $t-2$ pendant vertices, and two suspended paths. It has $(t+n-1)-|L(v)|-1$ type-B disallowing gadgets incident on it. 
 By Lemma~\ref{lem:red:cascade}, all the $t+n-1-L(v)-1$ type-B gadget edges incident on $v$  are forced to have weight zero.
  Hence, the maximum value of color for $v$ that can be obtained under any proper weight function is $d_G(v)+t$. Also, because of the two suspended paths, color of $v$ under any proper weight function is at least $2$. Hence, $$2 \leq {\sf color}_w(v) \leq d_G(v)+t \leq n-1+ t .$$ 
  But, we added disallowing gadgets at $v$ for all the colors $[t+n-1]\setminus(L(v) \cup \{1\})$. Also, because of its two suspended paths, it can not have colors $0$ or $1$. 
  Thus, the only feasible colors of $v$ are from $L(v)$. The weight function $w$ is proper, so no two adjacent vertices have the same color.  
  Define a coloring $c$ as $c(v) = {\sf color}_w(v)$,  for all $v \in V(G)$. 
  If $uv \in E(G)$, then $uv \in E(H)$. As $w$ is proper, ${\sf color}_w(u) \neq {\sf color}_w(v)$. So, $c(u) \neq c(v)$ for any two adjacent vertices in $G$.
  And we have already explained that $c(v) \in L(v)$ for all $v \in V(G)$. Hence, $c$ is in fact a proper list coloring of $G$.
\end{proof}
\begin{theorem}\label{thm:fvs}
    \vcew is $W[1]$-hard parameterized by the size of a minimum feedback vertex set.
\end{theorem}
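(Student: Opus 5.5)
The plan is to treat the construction above as a parameterized reduction from \lc{} parameterized by the vertex cover number, which is \WoneHard~\cite{fiala2011parameterized}. Given an instance $(G,\mathcal{L})$, we first apply the preprocessing described before the construction. This removes vertices whose lists are longer than $n$, which does not increase the vertex cover number, and renames the colors so that every list lies in $\{2,\ldots,n^2+1\}$. We then build $H$ as in the construction. Correctness, meaning that $(G,\mathcal{L})$ is a yes-instance if and only if $H$ is, is exactly Lemma~\ref{lem:correctnessfvs}, which relies on Lemma~\ref{lem:red:cascade}. So what remains is to check that the reduction runs in polynomial (in particular FPT) time and that the new parameter is bounded by a function of the old one.

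For the running time, we count the vertices of $H$. There is a single hub $z$ with $N=n^3+n^2-n$ type-A gadgets. The $(N+i)$-disallowing type-A gadget contributes $O(N)$ vertices, so all type-A gadgets together have $O(N^2)=O(n^6)$ vertices. Each vertex $v\in V(G)$ receives at most $t+n$ type-B paths. Each path has at most $N$ vertices, and each of these carries at most $N$ suspended paths, giving $O(nN^2)$ vertices per path. Finally, each $v$ has $t-2$ pendant vertices and two suspended paths. In total $|V(H)|$ is polynomial in $n$, roughly $O(n^{10})$, and $H$ can clearly be built in polynomial time.

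The main step is bounding the feedback vertex set of $H$. Let $S$ be a minimum vertex cover of $G$ with $|S|=k$. We claim that $F=S\cup\{z\}$ is a feedback vertex set of $H$, of size $k+1$. To see this, look at $H-F$. Each type-A triangle passes through $z$, so deleting $z$ turns every type-A gadget into a tree: a path $u$--$v$ together with its suspended paths. Each type-B path ends at $z$, so after deleting $z$ it hangs as a tree from its vertex $v\in V(G)$. Pendant vertices and suspended paths are trees attached at a single vertex. The vertices of $V(G)\setminus S$ form an independent set, so none of them lies on a cycle using edges of $G$. Consequently, every component of $H-F$ consists of one vertex $v\in V(G)\setminus S$ with trees attached to it, or is a tree coming from a gadget; hence $H-F$ is acyclic. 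The point to verify carefully is that no cycle can use two different gadget paths. Any such cycle would have to pass through $z$ or through a vertex of $G$, and the only edges between gadgets and $G$ or $z$ are those already accounted for.

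The new parameter is therefore $k+1$, which completes the parameterized reduction and proves hardness for \vcew. The same instance, with no pre-weighted edges, is a valid instance of \prvcew, so hardness for that problem follows as well, matching Theorem~\ref{thm:hardFVS}.
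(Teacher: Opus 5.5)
Your proposal is correct and follows essentially the same route as the paper: the reduction from \lc{} parameterized by vertex cover number, correctness via Lemma~\ref{lem:correctnessfvs}, and the feedback vertex set $S\cup\{z\}$. Your component-by-component argument that $H-(S\cup\{z\})$ is a forest is a more explicit version of the paper's observation that every cycle of $H-z$ already lies in $G$, and your polynomial size count fills in a detail the paper only asserts (it slightly overcounts, since each type-B path has only $O(N^2)$ vertices, but it is still a valid bound).
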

\begin{proof}
  \lc is $W[1]$-hard parameterized by vertex cover number. 
  The construction of $H$ takes polynomial time in $n$. 
  By lemma \ref{lem:correctnessfvs}, $(G, \mathcal{L})$ is yes-instance of \lc if and only if $H$ is yes-instance of \vcew.
    Let $vc$ be size of a minimum vertex cover $S$ of $G$. Now we claim that $S\cup \{z\}$ is a feedback vertex set in $H$. Observe that any cycle in $H-z$ is also a cycle in $G$. Since $S$ is a vertex cover in $G$, $S$ hits all the cycles in $G$. So $S\cup \{z\}$ hits all the cycles in $H$ and thus $H$ has a feedback vertex set of size at most $vc+1$. 
    Hence, \vcew is $W[1]$-hard parameterized by the size of a minimum feedback vertex set.  
\end{proof}

Since treewidth of a graph is at most the size of a minimum feedback vertex set in it, \vcew is also $W[1]$-hard, when parameterized by the treewidth. 
Theorem~\ref{thm:fvs} also implies that \prvcew{} is $W[1]$-hard by the size of a minimum feedback vertex set.


\section{\prvcew}
\label{sec:prewt}
A general version of \vcew is defined as follows. Given a graph $G$, a subset $E_1$ of edges of $G$, and a pre-edge weighting function ${w}: E_1 \to \{0,1\}$, find a proper weight function $\hat{w}$ that extends the pre-edge weighting function, if it exists. 
\begin{corollary}
    \prvcew is \WoneHard parameterized by the size of a minimum feedback vertex set.
\end{corollary}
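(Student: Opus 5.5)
The corollary follows from \autoref{thm:fvs} by the trivial reduction. \vcew is exactly the special case of \prvcew in which the set $E_1$ of pre-weighted edges is empty. So I plan to give a parameterized reduction from \vcew parameterized by the size of a minimum feedback vertex set to \prvcew under the same parameter. Given an instance $G$ of \vcew, output the instance $(G, E_1=\emptyset, w)$ with $w$ the empty function.

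For correctness, every weight function $\hat{w}:E(G)\to\{0,1\}$ vacuously extends the empty pre-weighting. Hence $G$ admits a proper weight function if and only if $(G,\emptyset,w)$ admits a proper weight function extending $w$. The reduction takes linear time. The graph is unchanged, so the feedback vertex set number is preserved exactly and the parameter bound $k'\le g(k)$ holds with $g$ the identity.

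Composing this with the reduction from \lc parameterized by vertex cover (used in \autoref{thm:fvs}) gives $W[1]$-hardness of \prvcew parameterized by the size of a minimum feedback vertex set. For a non-vacuous version, one could instead pre-weight the edges forced by the gadgets, such as the path edges of the type-B gadgets, to $0$ and the triangle edges incident to $z$ to $0$/$1$. By Lemma~\ref{lem:red:cascade} this does not change the answer, but it is unnecessary.

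There is no real obstacle here. The only point needing care is that the definition of \prvcew allows $E_1=\emptyset$; since it says ``some edges'', I will note explicitly that the empty set is permitted, as in the general formulation given at the start of this section.
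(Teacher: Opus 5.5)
Your proposal is correct and matches the paper's own argument: \vcew is the special case of \prvcew with $E_1=\emptyset$, so the identity map is a parameterized reduction that preserves the feedback vertex set number, and $W[1]$-hardness transfers directly from \autoref{thm:fvs}. The optional non-vacuous variant you sketch is unnecessary. Note also that fixing which triangle edge at $z$ gets $0$ and which gets $1$ is justified by the symmetry of the type-A gadget, not by Lemma~\ref{lem:red:cascade}; this does not affect your main argument.
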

\prvcew boils down to \vcew when $E_1$ is empty set, and hence \vcew is a special case of \prvcew. By theorem \ref{thm:fvs}, \vcew is \WoneHard parameterized by $fvs$. Hence, the corollary.



We consider a restricted version of \prvcew in which the pre-edge weights are of a single type. That is the pre-edge weighting function ${w}:E_1 \to \{1\}$. 
We prove that this restricted version is tractable when parameterized by the vertex cover number.
Here onwards, whenever we mention \prvcew, it is this restricted version.
\subsection{FPT algorithm parameterized by Vertex Cover number}
Let $G$ be a given graph, with vertex cover number at most $k$. In case of a yes-instance of \vcew, we gave an upper bound $8k^2+8k=T$ on the colors induced by a proper weight function.
The same bound doesn't hold true anymore for \prvcew. The reason is that there can be many edges ($> T$) pre-weighted by a given function ${w}$, which can not be changed. However, we can still give a bound on the additional gain in the colors, apart from the color induced by ${w}$. 
Given ${w}:E_1 \to \{1\}$, we define $base_{{w}}(v)$ as the base color of vertex $v$ induced by ${w}$. 
If $(G,{w},k)$ is a yes-instance, then there exists a proper weight function $\hat{w}$ such that ${w}(e)=\hat{w}(e), \forall e \in E_1$. In this case, we call $\hat{w}$ as an extension of $w$. And ${\sf color}_{\hat{w}}(v) \geq  base_{{w}}(v)$ for every $v \in V(G)$. We give a bound on ${\sf color}_{\hat{w}}(v) - base_{{w}}(v)$.

\begin{lemma}
\label{lem:precolor}
   Let $G$ be a graph with vertex cover number $k$ and $(G,{w},k)$ is a yes-instance \prvcew, with all the pre-weights are 1. 
    Then there exists a proper weight function $\hat{w}:~E(G)\rightarrow \{0,1\}$, which is an extension of ${w}$ and for all $v\in V(G)$,
    ${\sf color}_{\hat{w}}(v) - base_{{w}}(v) \leq 8k^2+8k$.  
\end{lemma}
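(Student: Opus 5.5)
The plan is to rerun the potential-function argument of Lemma~\ref{lem:boundedcolor}, measuring excess above the base color rather than above $0$, and to change weights only on edges outside $E_1$. Fix a vertex cover $S$ with $|S|=k$, let $I=V(G)\setminus S$, set $T=8k^2+8k$, and consider only proper extensions $w'$ of ${w}$. Define the potential $P(w')=\sum_{v}\max(0,\,{\sf color}_{w'}(v)-base_{{w}}(v)-T)$. Take $\hat w$ to be a proper extension of minimum potential and assume, for contradiction, that $P(\hat w)>0$.

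Two facts let the old argument go through.
\begin{itemize}
\item Every pre-weighted edge has weight $1$. Hence an edge with current weight $0$ is never in $E_1$ and may be raised to $1$. An edge with weight $1$ may be lowered to $0$ only if it is not in $E_1$.
\item Every vertex of $I$ still has color at most $k$, because its degree is at most $k$. So any vertex with positive excess is some $x\in S$.
\end{itemize}

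Let $x\in S$ have positive excess and color $\bm c$. Then ${\sf color}_{\hat w}(x)-base_{{w}}(x)>T$. The pre-weighted edges at $x$ account for exactly $base_{{w}}(x)$, and the free edges from $x$ into $S$ contribute at most $k-1$. Hence the set $Y$ of $y\in I$ with $xy\notin E_1$ and $\hat w(xy)=1$ satisfies $|Y|\ge T+1-(k-1)$, exactly as in \eqref{eqn:y-lower-bound}. Also $\bm c\ge base_{{w}}(x)+T+1>T$. Every argument in the earlier proof that used only $\bm c>T$ therefore survives unchanged. This includes the sets $C$ and $D$, the bounds $|C|\le 2k+1$, $|D|\ge 5k$ and $d_1\le 2k+2$, and all the comparisons showing that $x$'s lowered color exceeds $k$ and exceeds $8k$.

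Next, I will repeat the steps of the earlier proof in order, restricting every modification to edges outside $E_1$.
\begin{enumerate}
\item Run the greedy dropping procedure on the edges $xy$ with $y\in Y$; these are all free. If $q$ reaches $d_1$, the resulting extension is proper (the argument of Claim~\ref{claim:q-not-d1} is unchanged) and has strictly smaller potential, a contradiction.
\item Otherwise, for each $y\in Y'$ obtain a blocking vertex $z_y\in S\setminus\{x\}$. Then apply averaging to get a single $z$, a set $Y''$, and the value $t={\sf color}_{\hat w}(z)\le k-1$.
\item Since ${\sf color}_{\hat w}(z)=t\le k-1$, at most $k-1$ edges at $z$ have weight $1$. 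So at least $7k+15$ edges $zy$ with $y\in Y''$ have weight $0$. By the first fact above, these edges are automatically not in $E_1$.
\item Choose $\hat d$ and $Y_1$ as before, and perform the swap: set $xy$ from $1$ to $0$ and $zy$ from $0$ to $1$ for each $y\in Y_1$. Every edge touched is free, so the new weighting $w'$ still extends ${w}$.
\end{enumerate}

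Properness of $w'$ follows verbatim from Claim~\ref{claim:proper-edge-weighting}, since that proof only compares absolute colors, and $x$'s new color is at least $8k^2+k$. For the potential, the excess of $x$ strictly decreases. The new color of $z$ is at most $8k<T$, so its excess, which is at most its color, stays $0$. All other vertices are unchanged. Thus $P(w')<P(\hat w)$, a contradiction.

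The main obstacle I expect is checking that no step secretly used an edge of $E_1$. The delicate points are the lower bound on $|Y|$, which must count only free weight-$1$ edges and must subtract $base_{{w}}(x)$ correctly, and the swap at $z$. The swap is safe precisely because pre-weights are all $1$: with pre-weights $0$ allowed, the edges $zy$ could be frozen at $0$ and the argument would fail. That is why the lemma is restricted to this variant.
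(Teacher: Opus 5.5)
Your proposal is correct and follows the same route as the paper. You rerun the potential argument of Lemma~\ref{lem:boundedcolor} with the excess measured above \(base_{w}\), and observe that every edge ever modified lies outside \(E_1\) precisely because all pre-weights are \(1\); these are the edges \(xy\) with \(y\in Y\) and the weight-\(0\) edges \(zy\). You are also slightly more explicit than the paper's sketch, which checks that the final swap \(w'\) preserves \(E_1\) but leaves the same check for the intermediate greedy weighting \(\tilde{w}\) of Claim~\ref{claim:q-not-d1} implicit.
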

\begin{proof}[Proof sketch]
The proof of Lemma \ref{lem:precolor} follows similar arguments as in the proof of Lemma \ref{lem:boundedcolor}. So, instead of writing the complete proof, we give a brief description of the proof structure.  
For any proper weight function $w'$ of $G$, which is an extension of ${w}$, we define a { \textit{potential}} $P(w')$ as   
$P(w')=\sum_{v \in V(G)}\max (0, {\sf color}_{w'}(v)-base_{{w}}(v)-T)$. 
Let $S$ be a minimum vertex cover of $G$, and $I = V(G) \setminus S$. Given that $|S| =k$, color of every vertex in $I$ is at most $k$.
 
Let $\hat{w}$ be a weight function with the minimum  \textit{potential} possible.
If $P(\hat{w}) =0$, then the additional gain in the color after ${w}$ of every vertex is bounded by $T$, and $\hat{w}$ is the desired weight function.
If not, for contradiction, we assume that $P(\hat{w}) >0$, and note that $\hat{w}$ is a proper weight function and is an extension of ${w}$.
Sets $C$ and $D$ are the same as mentioned in the proof of Lemma \ref{lem:boundedcolor}.
Consider a vertex $x \in S$, ${\sf color}_{\hat{w}}(x)={c}$ such that ${c}-base_w(v) > T$. 
This means that apart from the edges assigned weight one by ${w}$, there are more than $T$ edges assigned weight one exclusively by $\hat{w}$. Let $Y$ be the set of endpoints of such edges. 
We emphasize again that a vertex $v$ such that ${w}(vx)=1$ is not part of the set $Y$, and $|Y| \geq 8k^2+7k+2$.
Here onwards, the proof is exactly the same as the proof of Lemma \ref{lem:boundedcolor}. 

We describe the weight function $w'$ again for reference. Note that we started with the original proper weight function $\hat{w}$, which is an extension of $w$.
\begin{itemize}
    \item For every $y \in Y_1$, set $w'(xy) = 0 \text{ (was }1)$ and $w'(zy) = 1 \text{ (was }0)$.
    \item Leave all other edge-weights unchanged; for all other edges $e$, set $w'(e) = \hat{w}(e)$.
\end{itemize}
We need to show that weights of edges in $E_1$ are unchanged. Note that any vertex $v$ such that $e=xv \in E_1$, $v \notin Y$. And $Y_1 \subseteq Y$. So $xy \notin E_1$, whose weight is changed. 
For edge $zy$, with $\hat{w}(zy)=0$, its weight is changed to $1$. But $\hat{w}$ is an extension of $w$, and if $\hat{w}(zy)=0$, then it clearly means that $zy \notin E_1$. And the remaining edge weights remain unchanged. Hence, all edges $e \in E_1$, their weights remain $1$ in $w'$. Hence $w'$ is an extension of $w$, it is a proper weight function by the proof of Lemma \ref{lem:boundedcolor}.  
The new proper weight function for $G$ is $w'$, and $P(w') < P(\hat{w})$ which is a contradiction to the minimality of the potential of $\hat{w}$.   
\end{proof}

Next, we describe an FPT algorithm to solve \prvcew. Let $(G,w,k)$ be a given instance. Here $w:E_1 \to \{1\}$ is a pre-weighting function given, where $E_1 \subseteq E(G)$. Let $S$ be a minimum vertex cover of $G$, and $|S| =k$. Such a vertex cover can be computed in $O^*(1.25284^k)$ time, where $O^*$ hides the polynomial factors of the input size \cite{harris2024faster}. Any vertex in $I = V(G) \setminus S$ has its degree bounded by $k$. So, $base_w(u) \in \{0,1, \dots , k\}$ for all $u \in I$. 
We want to partition $I$ into equivalence classes based on their neighbors in $S$, but two vertices in such an equivalence class may not be identical, as their base colors could be different. Hence, we refine the relation on the vertices of $I$ as follows. For two vertice $u, v \in I$. $u$ is related to $v$ if two conditions hold true: (i) $N(u) =N(v)$, (ii) for every $x \in N(u)$, if $ux \in E_1$, then $vx \in E_1$ (by the first condition $N(u)=N(v)$). This new relation is also an equivalence relation, partitioning $I$ into equivalence classes. 
For a subset 
 $S_2 \subseteq S_1 \subseteq S$, with $|S_1|=p \leq k$, let $A_{S_1,S_2}$ denote an equivalence class such that for all $u \in A_{S_1,S_2}$, $N(u)=S_1$, and for every $x \in S_2$, $ux \in E_1$, and for every $y \in S_1 \setminus S_2$, $uy \notin E_1$. Hence $base_w(u)=|S_2|$.
The total number of equivalence classes is at most $3^k$.
\begin{claim}\label{claimpre}
    If there is a large equivalence class $A_{S_1,S_2}$ with $|A_{S_1,S_2}| > kT+1$, then all but $kT+1$ vertices will get color $|S_2|$ by any proper weight function that extends $w$ (if it exists), and satisfies Lemma \ref{lem:precolor}.
\end{claim}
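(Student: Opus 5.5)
The plan mirrors Claim~\ref{claimvcred}, with the base color $|S_2|$ playing the role of the color $0$. Fix a proper weight function $\hat{w}$ that extends $w$ and satisfies Lemma~\ref{lem:precolor}. Every vertex $u \in A_{S_1,S_2}$ has $u x \in E_1$ for all $x \in S_2$, and these edges have $\hat{w}$-weight $1$. Hence ${\sf color}_{\hat{w}}(u) \geq base_w(u) = |S_2|$, with equality exactly when $\hat{w}(uy)=0$ for every $y \in S_1\setminus S_2$. So it suffices to show that fewer than $kT+1$ vertices of the class carry an edge $uy$ with $y \in S_1\setminus S_2$ and $\hat{w}(uy)=1$. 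Call such an edge an \emph{extra} edge: it is weighted $1$ by $\hat{w}$ but is not in $E_1$.

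Suppose, for contradiction, that at least $kT+1$ vertices of $A_{S_1,S_2}$ have color strictly larger than $|S_2|$. Each of them is incident to at least one extra edge, and the other endpoint of that edge lies in $S_1\setminus S_2 \subseteq S$, a set of at most $k$ vertices. By pigeonhole, some $y \in S_1\setminus S_2$ receives at least $T+1$ extra edges from the class. None of these edges lies in $E_1$, so each contributes to ${\sf color}_{\hat{w}}(y)$ but not to $base_w(y)$. Therefore ${\sf color}_{\hat{w}}(y) - base_w(y) \geq T+1 > 8k^2+8k$, which contradicts Lemma~\ref{lem:precolor}.

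The one point that needs care is this accounting. I must check that $base_w(y)$ counts only edges of $E_1$, and that all pre-weights equal $1$, so that extra edges are genuinely additional to the base color. Once that is checked, the argument is a direct pigeonhole. The threshold $kT+1$ is more than enough, since $kT$ already suffices. If $S_1 = S_2$, there are no extra edges at all and every vertex of the class trivially has color $|S_2|$.
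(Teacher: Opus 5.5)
Your proof is correct and follows the paper's own argument. Assume too many vertices of $A_{S_1,S_2}$ have color above the base color $|S_2|$. Each such vertex then carries a weight-$1$ edge outside $E_1$ into $S_1\setminus S_2$, and pigeonhole over at most $k$ endpoints gives a vertex of $S$ whose color exceeds its base by more than $T$, contradicting Lemma~\ref{lem:precolor}. The paper leaves implicit the check that these extra edges count toward ${\sf color}_{\hat{w}}(y)$ but not toward $base_w(y)$, which relies on all pre-weights being $1$; you state this explicitly.
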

\begin{claimproof}
Let $|S_2|=p$. For contradiction, assume there is a proper weight function $w'$ that extends $w$ and satisfies Lemma \ref{lem:precolor}, and the claim is not true. Then there will be more than $kT+1$ vertices 
$u \in A_{S_1,S_2}$ with color at least $p+1$, implying that there is at least one edge $e \notin E_1$ incident on $u$, such that $w'(e)=1$. But all the neighbors of $u$ are in $S_1$, with $|S_1| \leq k$. So, by pigeonhole principle, there will be at least one vertex $x \in S$ with ${\sf color}_{w'}(x) -base_w(x) > T$, a contradiction to the fact that $w'$ satisfies Lemma \ref{lem:precolor}. Hence, the claim holds true. 
\end{claimproof}
This claim leads to the reduction rule explained below. Here onwards, whenever we mention $w'$ is a proper weight function, it is also an extension of $w$. Hence, we don't explicitly mention that $w'$ is an extension of $w$.

Repeatedly apply the reduction rule unless for every equivalence class,  its size is at most  $kT+1$  or only pre-weighted edges are incident on the vertices in the equivalence class.

\smallskip
\noindent
\textbf{Reduction Rule}: Consider a large equivalence class $A$, with $|A| > kT+1$ and at least one edge incident on $A$ is from $E(G)\setminus E_1$. 
Let $E'(u)$ denote the set of unweighted edges (i.e., edges from $E(G)\setminus E_1$) incident on $u$. Pick $u \in A$ arbitrarily, and delete $E'(u)$.


Next we prove that the above reduction rule is safe. 
Let $H$ be the graph obtained after applying the above reduction rule once.  
That is, $H=G-E'(u)$. 
Next, we prove that $(G,w,k)$ is a yes-instance if and only if $(H,w,k)$ is a yes-instance of \prvcew. Since for any two vertices $u_1$ and $u_2$ in the same equivalence class, $(G-E'(u_1),w,k)$ is identical to $(G-E'(u_2),w,k)$ after we rename $u_2$ with $u_1$ in the later instance, by Claim~\ref{claimpre}, if $G$ has a proper weight function, then $H$ has a proper weight function. 

Conversely, suppose $H$ has a proper weight function $\hat{w}$ that satisfies Lemma~\ref{lem:precolor}. 
Here $H=G - E'(u)$, for some $u \in A$. 
We define a weight function $w'$ on $E(G)$ as follows. For all $e\in E(H)$, $w'(e)=\hat{w}(e)$,  and for all $e\in  E'(u)= E(G)\setminus E(H)$, $w'(e)=0$. 
We prove that $w'$ is a proper weight function of $G$. 
Note that ${\sf color}_{w'}(u)=base_w(u)$. 
And any vertex $x \in V(H) \setminus \{u\}$, ${\sf color}_{w'}(x)={\sf color}_{\hat{w}}(x)$,   
because all the edges in $E'(u)$ are set to zero, and the rest of the edges' weights remain unchanged.
Let $e=xy$ be an edge in $E(G)$. We have two cases. 
The first case is $e\in E(H)$, so $x, y \neq u$. Then, we know that  ${\sf color}_{w'}(x)={\sf color}_{\hat{w}}(x)$ and ${\sf color}_{w'}(y)={\sf color}_{\hat{w}}(y)$. Since $\hat{w}$ is a proper weight function, 
${\sf color}_{\hat{w}}(x)\neq {\sf color}_{\hat{w}}(y)$. Thus, we get that ${\sf color}_{w'}(x)\neq {\sf color}_{w'}(y)$.

In the second case $e=uv\in E'(u)$. Then $v \in S$. 
We know that ${\sf color}_{w'}(u)=base_w(u)$ and ${\sf color}_{w'}(v)= {\sf color}_{\hat{w}}(v)$. To prove 
${\sf color}_{w'}(v)\neq {\sf color}_{w'}(u)$, it is enough to prove that ${\sf color}_{w'}(v)\neq base_w(u)$. As we know, $u \in A$, which is a large equivalence class, with at least $kT+1$ vertices, apart from $u$. Since $\hat{w}$ satisfies Lemma~\ref{lem:precolor}, 
there is a vertex  $z\in A \cap V(H)$, such that that ${\sf color}_{w'}(z)=base_w(u)$. 
This implies that ${\sf color}_{w'}(v)\neq base_w(u)$ because $zv\in E(H)$.



Now, let $H^{\star}$ be the graph obtained after exhaustive application of the above reduction rule. Next, we upper bound the number of unweighted edges in $H^{\star}$ by $w$. There are at most $3^{k}$ equivalence classes.
Each equivalence class has at most $k(kT+1)$ unweighted edges. There are at most $k(k-1)$ unweighted edges in $S$, and $T=8k^2+8k$. So, total number of unweighted edges in $H^{\star}$ is at most $k(k-1)+3^k (k(8k^2+8k)+1) = O(3^kk^3)$.

\begin{theorem}
There is an FPT algorithm of running time 
$3^{O(k^4)}n^{O(1)}$ for \prvcew parameterized by the vertex cover number $k$.
\end{theorem}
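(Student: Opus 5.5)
The algorithm mirrors the one for \vcew, with the reduced instance $H^{\star}$ in place of the kernel. First compute a minimum vertex cover $S$ with $|S|=k$ in $O^*(1.25284^k)$ time. Then partition $I=V(G)\setminus S$ into the at most $3^k$ classes $A_{S_1,S_2}$ and apply the Reduction Rule exhaustively. Each application deletes at least one unweighted edge, so there are at most $|E(G)|$ applications and this phase runs in polynomial time. Its safeness, shown above, gives that $(G,w,k)$ is a yes-instance if and only if $(H^{\star},w,k)$ is. When the rule no longer applies, $H^{\star}$ has at most $O(3^k k^3)$ unweighted edges, by the counting above. One subtlety is that the definition of the classes changes as edges are deleted: a vertex $u$ whose unweighted edges were removed now has neighbourhood consisting only of pre-weighted edges. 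Such vertices form classes on which only pre-weighted edges are incident, and these are exactly the classes exempted from the rule. The count of unweighted edges is therefore unaffected, and the classes incident to unweighted edges are still at most $3^k$ in number, each contributing at most $k(kT+1)$ unweighted edges.

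Next I would argue that every proper extension of $w$ on $H^{\star}$ is determined by which unweighted edges receive weight $1$. By Lemma~\ref{lem:precolor}, applied to $H^{\star}$, if an extension exists then there is one in which every vertex $v$ satisfies ${\sf color}(v)-base_w(v)\le T$. The vertex cover number of $H^{\star}$ is still at most $k$, since we only deleted edges. In such an extension every unweighted edge of weight $1$ has an endpoint in $S$. Each vertex of $S$ gains at most $T$ from unweighted edges, so at most $kT=O(k^3)$ unweighted edges get weight $1$.

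The algorithm then enumerates all subsets $F$ of the unweighted edges of $H^{\star}$ with $|F|\le kT$. Alternatively, it can simply enumerate all $2^{O(3^k k^3)}$ subsets of unweighted edges; this is still FPT, but the stated bound calls for the size-restricted enumeration. For each $F$ it sets $F\cup E_1$ to $1$ and all other edges to $0$, and checks properness in polynomial time. The number of candidates is at most $\binom{O(3^kk^3)}{\le 8k^3+8k^2}\le (3^k k^3)^{O(k^3)}=3^{O(k^4)}$, which together with the polynomial preprocessing gives $3^{O(k^4)}n^{O(1)}$. Correctness follows from safeness of the rule plus Lemma~\ref{lem:precolor}: a yes-instance has a bounded extension, and the enumeration finds it. Any $F$ the enumeration accepts is a genuine proper extension, so no false positives arise.

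The main obstacle is not the enumeration but making sure the reduced instance is still handled correctly. Lemma~\ref{lem:precolor} must be applied to $H^{\star}$ itself, not to $G$, and the exempted all-pre-weighted classes must be checked not to break the edge bound. They are unbounded in size, but each such vertex has a fixed color $base_w(u)$ and no choices, so checking these vertices is part of the polynomial properness check for each candidate.
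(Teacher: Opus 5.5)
Your proposal is correct and follows essentially the same route as the paper: exhaustively apply the reduction rule to obtain $H^{\star}$ with $O(3^k k^3)$ unweighted edges, then use Lemma~\ref{lem:precolor} on $H^{\star}$ to restrict the brute-force enumeration to subsets of at most $kT$ unweighted edges, giving $3^{O(k^4)}n^{O(1)}$ time. Your additional remarks make explicit details the paper leaves implicit: the rule is applied only polynomially many times, reduced vertices move into the exempt all-pre-weighted classes, and $S$ remains a vertex cover of size at most $k$ in $H^{\star}$.
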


\begin{proof}
Let $(G,w,k)$ be the input instance. 
We find a minimum vertex cover $S$ of size $k$\cite{harris2024faster}. We partition vertices of $I$ into equivalence classes. Now, we apply the reduction rule exhaustively and get an equivalent instance $(H^{\star},w,k)$. The number of unweighted edges in $H^{\star}$ is at most $O(3^k k^3)$. If $(H^{\star},w,k)$ is a yes-instance, then by Lemma~\ref{lem:precolor}, there is a weight function $w'$ such that  ${\sf color}_{w'}(v)-base_w(v)  \leq 8k^2+8k$. This implies that the number of unweighted edges assigned $1$ by $w'$ is at most $k(8k^2+8k)$. So we choose all edge subsets of unweighted edges of $H^{\star}$ of size at most $k(8k^2+8k)$ and test in linear time whether setting those edges to $1$, and the remaining to $0$ is a proper weight function. 
The number of such edge subsets of size at most $k(8k^2+8k)$ in $E(H^{\star})$ is at most $2^{O(k^4)}$. Thus, the total running time follows.   
\end{proof}

\section{Conclusion}
\label{sec:conclusion}

In this work, we studied 
Vertex-Coloring $\{0,1\}$-Edge-Weighting problem 
and its generalization, Vertex-Coloring Pre-edge-Weighting from the parameterized complexity perspective. We considered the parameters -- vertex cover number, feedback vertex set number and treewidth of the input graph. We proved that the base problem and the pre-weighted version when all the pre-weights are 1 are FPT parameterized by the vertex cover number. An interesting open problem is the parameterized complexity of pre-weighted version of the problem with parameter vertex cover number. 



\bibliography{intro,ref}

\end{document}